%
%
%
%

\documentclass[runningheads,a4paper]{llncs}

\usepackage{amssymb}
\usepackage{amsmath}
\setcounter{tocdepth}{3}
\usepackage{graphicx}
\usepackage{algorithmic}
\usepackage{algorithm}
\usepackage{subfigure}
\usepackage{url}
\usepackage{appendix}
\urldef{\mailsa}\path|{fazli, jalaly, s_sadeghian}@ce.sharif.edu| 
\urldef{\mailsb}\path|{jhabibi,ghodsi}@sharif.edu|

\begin{document}

\mainmatter  

\title{On the Non-progressive Spread of Influence through Social Networks}

\titlerunning{} 

%
%
\author{MohammadAmin Fazli\inst{1}
\and Mohammad Ghodsi \inst{1}
\and Jafar Habibi\inst{1} \\
Pooya Jalaly Khalilabadi\inst{1}
\and Vahab Mirrokni\inst{2}
\and Sina Sadeghian Sadeghabad\inst{1}}
\authorrunning{} 
\institute{Computer Engineering Department, Sharif University of Technology, Tehran, Iran\\
\mailsa\\
\mailsb\\
\and Google Research NYC, 76 9th Ave, NewYork, NY 10011\\
}

%


%

\toctitle{} 
\tocauthor{Authors' Instructions}
\maketitle
\thispagestyle{empty}
\setcounter{page}{0}

\begin{abstract}
The spread of influence in social networks is studied in two main categories: the progressive model and the non-progressive model (see e.g. the seminal work of Kempe, Kleinberg, and Tardos in KDD 2003). While the progressive models are suitable for modeling the spread of influence in monopolistic settings, non-progressive are more appropriate for modeling non-monopolistic settings, e.g., modeling diffusion of two competing technologies over a social network.
Despite the extensive  work on the progressive model, non-progressive models have not been studied well. 
In this paper, we study the spread of influence in the non-progressive model under the strict majority threshold: given a graph $G$ with a set of initially infected nodes, each node 
gets infected at time $\tau$ iff a majority of its neighbors are infected at time $\tau-1$. Our goal in the \textit{MinPTS} problem is to find a minimum-cardinality initial set of infected nodes that would eventually converge to a steady state where all nodes of $G$ are infected.

We prove that while the MinPTS is NP-hard for a restricted family of graphs, it admits an improved constant-factor approximation algorithm for power-law graphs. We do so by 
proving lower and upper bounds in terms of the minimum and maximum degree of nodes in the graph. The upper bound is achieved in turn by applying a natural greedy algorithm.
Our experimental evaluation of the greedy algorithm also shows its superior performance compared to other algorithms for a set of real-world graphs as well as the random power-law graphs. 
Finally, we  study the convergence properties of these algorithms and show that the non-progressive model converges in at most $O(|E(G)|)$ steps. 

\end{abstract}
\newpage
\section{Introduction} \label{s:introduction}
Studying the \textit{spread of social influence} over in networks under various  \textit{propagation models} is a central issue in  social network analysis\cite{freeman2004development,dezső2002halting,pastor2001epidemic,wilson1989levels}. This issue plays an important role in several real-world applications including viral
marketing~\cite{brown1987social,domingos2001mining,richardson2002mining,kempe2003maximizing}. As categorized by Kempe et al.~\cite{kempe2003maximizing}, there are two main types of influence propagation models: the progressive and the non-progressive models.  In \textit{progressive models}, infected (or influenced) vertices will remain infected forever, but in the \textit{non-progressive model}, under some conditions, infected vertices may become uninfected again.
In the context of viral marketing and diffusion of technologies over social networks, the progressive model captures the monopolistic settings where one new service is propagated among nodes of the social network. On the other hand, in non-monopolistic settings, multiple service providers 
might be competing to get people adopting their services, and thus users may switch among two or more services back and forth. As a result, in these non-monopolistic settings, the more relevant model to capture the spread of influence is the non-progressive model~\cite{IKMW07,B93,E93,JY05}. 

 While the progressive model has been studied extensively in the literature \cite{kempe2003maximizing,tang2009social,goyal2010approximation,ben2009exact,chang2009spreading,chen2008approximability,chen2009efficient}, the non-progressive model
has not received much attention in the literature. In this paper, we study the non-progressive influence models, and  report both theoretical and experimental results for our models.  We focus on the {\em the strict majority} propagation rule in which the state of each vertex at time $\tau$ is determined by the state of the majority of its neighbors at time $\tau-1$. 
As an application of this propagation model, consider two competing technologies (e.g. IM service) that are competing in attracting nodes of a social network to adopt their service, and 
nodes tend to adopt a service that the majority of their neighbors already adopted. This type of influence propagation process can be captured by applying the strict majority rule.  
Moreover, as an illustrative example of the linear threshold model~\cite{kempe2003maximizing},  the strict majority propagation model is suitable for modeling transient faults in fault tolerant systems~\cite{flocchini2003time,peleg2002local,flocchini2004dynamic}, and also used in verifying convergence of consensus problems on social networks \cite{mossel2009reaching}.  Here we study the non-progressive influence models under the strict majority rule. In particular, we are mainly interested in {\em minimum perfect target set} problem where the goal is to identify a target set of nodes to infect at the beginning of the process so that all nodes get infected at the end of the process. We will present  approximation algorithms and hardness results for the problem as well experimental evaluation of our results.  As our main contributions, we provide  improved upper and lower bounds on the size of the minimum perfect target set, which in turn, result in improved constant-factor approximations for power-law graphs. Finally, we also study the convergence rate of our algorithms and report preliminary results. Before stating our results, we define the problems and models formally.


\noindent {\bf Problem Formulations.} Consider a graph $G(V, E)$. Let $N(v)$ denote the set of neighbors of vertex $v$, and $d(v)=|N(v)|$. Also, let $\Delta(G)$ and $\delta(G)$ denote the maximum and minimum degree of vertices in $G$  respectively. 
The induced subgraph of $G$ with a vertex set $S \subseteq V(G)$ is denoted by $G[S]$. Also $d_{S}(v)$ denotes the number of neighbors of $v$ in subset $S$.  

A \textit{0/1 initial assignment} is a function $f_0:V(G)\rightarrow \{0,1\}$. For any 0/1 initial assignment $f_0$, let $f_{\tau}:V(G)\rightarrow \{0,1\}$ ($\tau \geq 1$) be the state of vertices at time $\tau$ and $t(v)$ be the threshold associated with vertex $v$. For the strict majority model, the threshold $t(v)= \lceil \frac{d(v)+1}{2}\rceil$ for each vertex $v$.

In the \textit{non-progressive strict majority} model:
$$
f_{\tau}(v) = \left\{ \begin{array}{ll}
	0 & \textrm{if $\sum_{u\in N(v)} f_{\tau-1}(u) < t(v)$}\\
	1 & \textrm{if $\sum_{u\in N(v)} f_{\tau-1}(u) \geq t(v)$ . }\\
\end{array} \right.
$$
In \textit{progressive strict majority} model:
$$
f_{\tau}(v) = \left\{ \begin{array}{ll}
	0 & \textrm{if $f_{\tau - 1}(v)=0$ and $\sum_{u\in N(v)} f_{\tau-1}(u) < t(v)$}\\
	1 & \textrm{if $f_{\tau - 1}(v)=1$ or $\sum_{u\in N(v)} f_{\tau-1}(u) \geq t(v)$ . }\\
\end{array} \right.
$$
Strict majority model is related to the \textit{linear threshold model} in which $t(v)$ is chosen at random  and not necessarily equal to $\lceil \frac{d(v)+1}{2}\rceil$. 

A 0/1 initial assignment $f_0$ is called a \textit{perfect target set} (PTS) if for a finite $\tau$, $f_{\tau}(v)=1$ for all $v\in V(G)$, i.e., the influence will converge to 
a steady state of all $1$'s.  The cost of a target set $f_0$, denoted by $cost(f_0)$, is the number of vertices $v$ with $f_0(v)=1$. The minimum \textit{perfect target set }(MinPTS) problem is to find a perfect target set with minimum cost. The cost of this minimum PTS is denoted by $PPTS(G)$ and $NPPTS(G)$ respectively for progressive and non-progressive models. This problem is also called \textit{target set selection} \cite{ackerman2010combinatorial}. Another variant of this problem is the 
 \textit{maximum active set} problem~\cite{ackerman2010combinatorial} where 
 the goal is to find at most $k$ nodes to activate (or infect) at time zero such that the number of
 finally infected vertices is maximized. 

A graph is power-law if and only if its degree distribution follows a power-law distribution asymptotically. That is, the fraction $P(x)$ of nodes in the network having degree $x$ goes for large number of nodes as $P(x) = \alpha x^{-\gamma}$ where $\alpha$ is a constant and $\gamma > 1 $ is called power-law coefficient. It is widely observed that most social networks are power-law~\cite{clauset2009power}.   
 
\noindent {\bf Our Results and Techniques.}
In this paper, we study the spread of influence in the non-progressive model under the strict majority threshold.  
We present  approximation algorithms and hardness results for the problem as well experimental evaluation of our results.  As our main contributions, we provide  improved upper and lower bounds on the size of the minimum perfect target set, which in turn, result in improved constant-factor approximations for power-law graphs. In addition, we prove that the MinPTS problem (or computing $NPPTS(G)$) is NP-hard for a restricted family of graphs. In particular, we prove lower and upper bounds on $NPPTS(G)$ in terms of the minimum degree ($\delta(G)$) and maximum degree ($\Delta(G)$) of nodes in the graph, i.e., we show that $\frac{2n}{\Delta(G)+1} \leq NPPTS(G) \leq \frac{n\Delta(G)(\delta(G)+2)}{4\Delta(G)+(\Delta(G)+1)(\delta(G)-2)}$.  The proofs of these bounds are combinatorial and start by observing that
in order to bound $NPPTS(G)$ for general graphs, one can bound it for bipartite graphs. The upper bound is achieved in turn by applying a natural greedy algorithm which can be easily implemented. 
Our experimental evaluation of the greedy algorithm also shows its superior performance compared to other algorithms for a set of real-world graphs as well as the random power-law graphs. 
Finally, we study the convergence properties of these algorithms. We  first observe that the process will always converges  to a fixed point or a cycle of size two. Then we focus on the convergence time and prove that for a given graph $G$, it takes at most $O(|E(G)|)$ rounds for the process to converge.  We also evaluate the convergence rate of the non-progressive influence models on some real-world social networks, and report the average convergence time for a randomly chosen set of initially infected nodes.

{\bf \noindent More Related Work.}
The non-progressive spread of influence under the strict majority rule is related to the diffusion of two or more competing technologies over 
a social network~\cite{IKMW07,B93,E93,JY05}. 
As an example, an active line of research in economics and mathematical sociology is concerned with modeling these types of diffusion processes as a coordination game played on a social network~\cite{IKMW07,B93,E93,JY05}. Note that none of these previous prior work provide a bound for the perfect target set problem.

It has been brought to our attention that in a relevant unpublished work by Chang \cite{chang2010reversible}, the MinPTS problem on pawer-law graphs is studied and the bound of $NPPTS(G) = O(\lceil \frac{|V|}{2^{\gamma-1}}\rceil)$ is proved  under non-progressive majority models in a power-law graph. But his results do not practically provide any bound for the strict majority model. We will show that our upper bound is better and practically applicable for different amounts of $\gamma$ under strict majority threshold.
 
Tight or nearly tight bounds on the $PPTS(G)$ are known for special types of graphs such as torus, hypercube, butterfly and chordal rings \cite{flocchini2001optimal,flocchini2003time,luccio1999irreversible,peleg2002local,pike2005decycling}. The best bounds for progressive strict majority model in general graphs are due to Chang and Lyuu. In  \cite{chang2009spreading}, they showed that for a directed graph $G$, $PPTS(G) \leq  \frac{23}{27}|V(G)|$. In \cite{chang2009irreversible}, they improved their upper bound to $\frac{2}{3}|V(G)|$ for directed graphs and $\frac{|V(G)|}{2}$ for undirected graphs.  However, to the best of our knowledge, there is no known bound for $NPPTS(G)$ for any type of graphs. In this paper, we will combinatorially prove that $\frac{2n}{\Delta(G)+1} \leq NPPTS(G) \leq \frac{n\Delta(G)(\delta(G)+2)}{4\Delta(G)+(\Delta(G)+1)(\delta(G)-2)}$.

It is known  that the Target Set Selection problem and Maximum Active Set problem are both NP-hard in the linear threshold model~\cite{kempe2003maximizing}, and approximation algorithms have been developed for these problems. 
 Kempe et al. \cite{kempe2003maximizing} and Mossel and Roch \cite{mossel2007submodularity} present a $(1 - \frac{1}{e})$-approximation algorithm for the maximum active set problem by showing that the set of finally influenced vertices as a function of the originally influenced nodes is submodular. On the other hand, it has been shown that the target set selection problem is not approximable for different propagation models~\cite{goyal2010approximation,ben2009exact,chang2009irreversible,chen2008approximability}. The inapproximability result of Chang and Lyuu in \cite{chang2009irreversible} on progressive strict majority threshold model is the most relevant result to our results. They show that  unless $NP \subseteq TIME(n^{O(\ln{\ln{n}})})$, no polynomial time $((1/2 - \epsilon) \ln{|V|})$-approximation algorithm exists for computing $PPTS(G)$. To the best of our knowledge, no complexity theoretic results have been obtained for the non-progressive models.

The problem of maximizing social influence for specific family of graphs has been  studied under the name of~\textit{dynamic monopolies} in the combinatorics literature~\cite{flocchini2001optimal,flocchini2003time,luccio1999irreversible,peleg2002local,pike2005decycling,chang2009irreversible,ackerman2010combinatorial,chang2010reversible}. All these results are for the progressive model.
The optimization problems related to the non-progressive influence models are not well-studied in the literature. The one result in the area is due to   Kempe et al. \cite{kempe2003maximizing} who presented a general reduction from non-progressive models to progressive models. Their reduction, however, is not applicable to the perfect target set selection problem.


\section{Non-Progressive Spread of Influence in General Graphs} \label{sec:general}
In this section,  we prove  lower bound and upper bounds for minimum PTS in graphs, and finally show that finding the minimum PTS in general graphs is NP-Hard.

\noindent {\bf Lower bound.} 
The following theorem shows that if we have some lower bound and upper bound for minimum Perfect Target Set in bipartite graphs then these bounds could be generalized to all graphs ( Theorem \ref{thm:lowernppts}).
\begin{lemma}\label{thm:bip_to_gen}
If $\alpha |V(H)| \leq NPPTS(H)\leq \beta |V(H)| $ for every bipartite graph $H$ under strict majority threshold, then $\alpha |V(G)| \leq NPPTS(G)\leq \beta |V(G)|$ under strict majority threshold for every graph $G$ (see appendix \ref{app:general_proofs}).
\end{lemma}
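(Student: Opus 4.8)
The plan is to reduce the general case to the bipartite case via the \emph{bipartite double cover}. Given $G$, I would build a bipartite graph $H$ with vertex set $V_1 \cup V_2$, where $V_1 = \{v^1 : v \in V(G)\}$ and $V_2 = \{v^2 : v \in V(G)\}$ are two disjoint copies of $V(G)$, and for each edge $uv \in E(G)$ I add the two crossing edges $u^1 v^2$ and $v^1 u^2$. By construction $H$ is bipartite, $|V(H)| = 2|V(G)|$, and $d_H(v^1) = d_H(v^2) = d_G(v)$, so the strict-majority threshold of each copy of $v$ equals $t(v)$. The whole lemma then reduces to the single identity
\begin{equation}\label{eq:double}
NPPTS(H) = 2\, NPPTS(G),
\end{equation}
because feeding the bipartite graph $H$ into the hypothesis gives $\alpha|V(H)| \le NPPTS(H) \le \beta|V(H)|$, and substituting $|V(H)| = 2|V(G)|$ together with \eqref{eq:double} and dividing by $2$ yields exactly $\alpha|V(G)| \le NPPTS(G) \le \beta|V(G)|$.

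For the upper bound $NPPTS(H) \le 2\,NPPTS(G)$ I would lift a minimum PTS $f_0$ of $G$ symmetrically, setting $g_0(v^1) = g_0(v^2) = f_0(v)$. A straightforward induction on $\tau$ shows this symmetry is preserved and that each side mirrors the $G$-dynamics: if $f_\tau$ is the time-$\tau$ state of the process on $G$ started from $f_0$, then $g_\tau(v^1) = g_\tau(v^2) = f_\tau(v)$. Indeed, the neighbors of $v^1$ in $H$ are exactly $\{u^2 : u \in N(v)\}$, so $v^1$ sees the infected count $\sum_{u \in N(v)} g_{\tau-1}(u^2) = \sum_{u \in N(v)} f_{\tau-1}(u)$ by the inductive hypothesis, and hence updates precisely as $v$ does in $G$. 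Thus a perfect $f_0$ lifts to a perfect $g_0$ with $cost(g_0) = 2\,cost(f_0)$.

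The harder direction is $NPPTS(H) \ge 2\,NPPTS(G)$, and this is where I expect the main obstacle. Take a minimum PTS $g_0$ of $H$ and write $a_\tau(v) = g_\tau(v^1)$ and $b_\tau(v) = g_\tau(v^2)$. The key structural observation is that the double-cover dynamics decouples the two sides with a one-step delay: letting $\Phi$ denote one step of the non-progressive strict-majority map on $G$, the crossing adjacency gives $a_\tau = \Phi(b_{\tau-1})$ and $b_\tau = \Phi(a_{\tau-1})$, whence $a_\tau = \Phi^2(a_{\tau-2})$ and therefore $a_{2k} = \Phi^{2k}(a_0)$. Since $g_0$ is perfect and the all-ones configuration is a fixed point (for $\delta(G) \ge 1$), there is an even time at which $g$ is identically $1$, so $\Phi^{2k}(a_0) \equiv 1$ for large $k$; but $\Phi^{2k}(a_0)$ is exactly the time-$2k$ state of the $G$-process started from $A := \{v : a_0(v)=1\}$, so $A$ is a PTS of $G$. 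The same argument on the $V_2$ side makes $B := \{v : b_0(v)=1\}$ a PTS of $G$. Hence $NPPTS(G) \le \min(|A|,|B|) \le \tfrac{1}{2}(|A|+|B|) = \tfrac{1}{2}\,cost(g_0) = \tfrac{1}{2}NPPTS(H)$, which together with the upper bound establishes \eqref{eq:double} and completes the proof.
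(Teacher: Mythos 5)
Your proof is correct and follows essentially the same route as the paper: both construct the bipartite double cover $H$ of $G$, lift a PTS of $G$ symmetrically to bound $NPPTS(H)\le 2\,NPPTS(G)$, and project a PTS of $H$ back onto one side (using the parity-alternating correspondence between the two sides and the $G$-dynamics) for the reverse inequality. The only cosmetic differences are that you package the two inequalities as the clean identity $NPPTS(H)=2\,NPPTS(G)$ and use $\min(|A|,|B|)\le\tfrac12(|A|+|B|)$ where the paper simply takes the smaller side w.l.o.g.
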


The following lemma shows characteristics of PTSs in some special cases. These will be used in proof of our theorems.
\begin{lemma}\label{lem:maxdegree}
Consider the non-progressive model and let $G=(X,Y)$ be a bipartite graph and $f_0$ be a perfect target set under strict majority threshold. For every $S\subseteq V(G)$ if $\sum_{v \in S \cap X}{f_0(v)} = 0$ or $\sum_{v\in S \cap Y}{f_{0}(v)} = 0$, then there exists at least one vertex $u$ in $S$ such that $d_S(u) \leq d(u) - t(u)$ (see appendix \ref{app:general_proofs}). 
\end{lemma}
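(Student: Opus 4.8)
The plan is to argue by contradiction, assuming $f_0$ is a perfect target set for which the conclusion fails. The negation says every vertex $u\in S$ satisfies $d_S(u) > d(u)-t(u)$, which I would immediately rewrite in the form that actually drives the argument: the number of neighbors of $u$ lying \emph{outside} $S$ is $d(u)-d_S(u) \le t(u)-1$, i.e.\ strictly below its threshold. By the symmetry of the two sides of the bipartition we may assume the hypothesis that holds is $\sum_{v\in S\cap X} f_0(v)=0$, so that no vertex of $S\cap X$ is infected at time $0$.

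The single observation I would extract from bipartiteness plus the contradiction hypothesis is the following propagation rule: if at some time $\tau$ every vertex on one side of $S$ (say $S\cap Y$) is uninfected, then every vertex of the other side $S\cap X$ is uninfected at time $\tau+1$. Indeed, a vertex $u\in S\cap X$ has all of its $S$-neighbors inside $S\cap Y$ (since $G$ is bipartite), and these are uninfected at time $\tau$; the only infected neighbors $u$ can have therefore lie outside $S$, and there are at most $t(u)-1 < t(u)$ of them, so $f_{\tau+1}(u)=0$. The same statement holds with the roles of $X$ and $Y$ exchanged.

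Using this rule I would prove by induction on $\tau$ the \emph{alternating invariant}: every vertex of $S\cap X$ is uninfected at every even time, and every vertex of $S\cap Y$ is uninfected at every odd time. The base case is exactly the hypothesis $\sum_{v\in S\cap X} f_0(v)=0$. The inductive step just applies the propagation rule twice, carrying ``$S\cap X$ uninfected at time $\tau$'' to ``$S\cap Y$ uninfected at time $\tau+1$'' and then to ``$S\cap X$ uninfected at time $\tau+2$''. Finally I would invoke the definition of a perfect target set: the process reaches the all-ones steady state at some finite time $\tau^{*}$ and stays all-ones for every $\tau\ge\tau^{*}$ (note that a graph admitting a PTS can have no isolated vertex, since an isolated vertex has $t(v)=1>0=d(v)$ and would flip back to $0$; hence $t(v)\le d(v)$ throughout). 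Taking an even $\tau\ge\tau^{*}$ contradicts the invariant whenever $S\cap X\neq\emptyset$, and an odd $\tau\ge\tau^{*}$ contradicts it whenever $S\cap Y\neq\emptyset$; since $S$ is non-empty, one of these cases must occur, yielding the contradiction.

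The crux of the argument — and the one step needing care — is the propagation rule of the second paragraph: it is precisely the contradiction-hypothesis inequality $d(u)-d_S(u)\le t(u)-1$, together with bipartiteness (so that all of $u$'s in-$S$ neighbors sit on the opposite, currently uninfected side), that prevents the boundary of $S$ from ever injecting enough infection to activate a vertex of $S$. Everything else is bookkeeping, plus the mild degenerate check on isolated vertices noted above.
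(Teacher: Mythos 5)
Your proposal is correct and follows essentially the same route as the paper: negate the conclusion so that every $u\in S$ has at most $t(u)-1$ neighbors outside $S$, then use bipartiteness to propagate ``all of $S\cap X$ uninfected'' to ``all of $S\cap Y$ uninfected'' and back, alternating forever, contradicting the perfect-target-set property. The extra remark on isolated vertices is harmless but not needed.
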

If the conditions of previous lemma holds, we can obtain an upper bound for number of edges of the graph. Following lemma provides this upper bound. This will help us finding a lower bound for NPPTS of graphs. The function $t:V(G)\rightarrow \mathbb{N}$ may be any arbitrary function but here it is interpreted as the threshold function.
\begin{lemma}\label{lem:maxedge}
Consider a graph $G$ with $n$ vertices. If for every $S\subseteq V(G)$ there exists at least one vertex $v$ for which $d_S(v) \leq d(v) - t(v)$, then $|E(G)| \leq \sum_{u\in V(G)} (d(u) - t(u))$ (see appendix \ref{app:general_proofs}).
\end{lemma}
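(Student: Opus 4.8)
The plan is to deduce the edge bound from a greedy \emph{elimination ordering} of the vertices, reading the hypothesis exactly as in Lemma~\ref{lem:maxdegree}: the guaranteed vertex $v$ is taken to lie in $S$. This reading is what makes the argument run, and it is in fact necessary — if $v$ were allowed to fall outside $S$ the claim fails (take a single edge whose two endpoints have threshold $1$ together with one isolated vertex of threshold $0$: the isolated vertex satisfies $d_S(v)\le d(v)-t(v)$ for every $S$, yet $|E(G)|=1>0=\sum_u(d(u)-t(u))$).

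Concretely, I would peel the graph one vertex at a time. Put $S_1=V(G)$, and having selected $v_1,\dots,v_{i-1}$, set $S_i=V(G)\setminus\{v_1,\dots,v_{i-1}\}$ and invoke the hypothesis on $S_i$ to extract a vertex $v_i\in S_i$ with $d_{S_i}(v_i)\le d(v_i)-t(v_i)$. Because exactly one vertex leaves at each stage, this produces an ordering $v_1,\dots,v_n$ that lists every vertex of $G$ exactly once.

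The core of the proof is a double-counting identity on this ordering. For each $i$ the term $d_{S_i}(v_i)$ counts the neighbors of $v_i$ inside $S_i=\{v_i,v_{i+1},\dots,v_n\}$, i.e. precisely the neighbors of $v_i$ that come later in the order. Hence each edge $\{v_i,v_j\}$ with $i<j$ is tallied exactly once — in $d_{S_i}(v_i)$ via its earlier endpoint $v_i$, and not in $d_{S_j}(v_j)$ since $v_i\notin S_j$ — so that $\sum_{i=1}^n d_{S_i}(v_i)=|E(G)|$. Combining this with the per-vertex bounds supplied by the extraction step gives
\[
|E(G)|=\sum_{i=1}^{n} d_{S_i}(v_i)\le \sum_{i=1}^{n}\bigl(d(v_i)-t(v_i)\bigr)=\sum_{u\in V(G)}\bigl(d(u)-t(u)\bigr),
\]
which is exactly the desired inequality.

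The argument involves no heavy computation; the one point demanding care is the bookkeeping in the counting step. I must ensure that the elimination order visits each vertex exactly once — guaranteed by removing a single $v_i\in S_i$ at every stage, so that the right-hand sum genuinely ranges over all of $V(G)$ — and that every edge is charged to its earlier endpoint and to that endpoint alone. Once this is pinned down the result is immediate.
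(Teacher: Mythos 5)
Your proof is correct and is essentially the paper's argument in unrolled form: the paper proves the bound by induction on $n$, at each step taking $S$ to be the current vertex set, extracting the guaranteed vertex $v\in S$ with $d_S(v)\le d(v)-t(v)$, and deleting it, which is exactly your elimination ordering; your double-counting identity $\sum_i d_{S_i}(v_i)=|E(G)|$ is the telescoped version of the inductive step. Your side remark that the guaranteed vertex must lie in $S$ (as Lemma~\ref{lem:maxdegree} indeed provides) is a correct and worthwhile clarification of the lemma's statement.
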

The following theorem shows that for every bipartite graph $G$, $NPPTS(G) \geq \frac{2|V(G)|}{\Delta(G)+1}$. Lemma \ref{thm:bip_to_gen} generalizes this theorem to all graphs. Also, Theorem \ref{thm:tightlower} shows that this bound is tight. In the following, the induced subgraph of $G$ with a vertex set $S \subseteq V(G)$ is denoted by $G[S]$.
\begin{theorem}\label{thm:lowernppts}
For every bipartite graph $G=(X,Y)$ of order $n$, $NPPTS(G) \geq \frac{2n}{\Delta(G)+1}$.
\end{theorem}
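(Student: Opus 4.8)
The plan is to fix a minimum perfect target set $f_0$, let $A=\{v:f_0(v)=1\}$ be the initially infected set so that $NPPTS(G)=|A|$, and write $B=V(G)\setminus A$ for the uninfected vertices, split along the bipartition as $B_X=X\setminus A$ and $B_Y=Y\setminus A$. The whole argument aims at a lower bound on the total threshold mass $\sum_{v\in A}t(v)$ of the infected set; since $t(v)=\lceil\frac{d(v)+1}{2}\rceil$ is roughly $\frac{\Delta(G)+1}{2}$, this will translate into a lower bound on $|A|$, although pinning down the exact constant $\frac{2}{\Delta(G)+1}$ needs some parity care deferred to the last step.

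First I would apply Lemmas \ref{lem:maxdegree} and \ref{lem:maxedge} to two carefully chosen induced subgraphs. The crucial point — and the place where a naive attempt fails — is that one must feed in sets $S$ containing an \emph{entire} side of already-infected vertices, not merely the uninfected set $B$: taking $S=B$ alone yields only the weaker bound $|A|\ge\frac{n}{\Delta(G)+1}$. Instead I take $S^{*}=B_X\cup Y$ and $S^{**}=X\cup B_Y$. Each has one side ($S^{*}\cap X=B_X$, resp.\ $S^{**}\cap Y=B_Y$) entirely uninfected, and this property is inherited by every subset, so by Lemma \ref{lem:maxdegree} the induced subgraphs $G[S^{*}]$ and $G[S^{**}]$ satisfy the hypothesis of Lemma \ref{lem:maxedge} (with the inherited threshold $d_S(\cdot)-(d(\cdot)-t(\cdot))$; the statement of Lemma \ref{lem:maxedge} permits an arbitrary threshold function). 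Lemma \ref{lem:maxedge} then bounds their edge counts by $\sum_{u\in S}(d(u)-t(u))$.

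Next I would simplify these two edge inequalities. Because $G$ is bipartite, $|E(G[S^{*}])|=\sum_{u\in B_X}d(u)$, and writing $d(u)=t(u)+(d(u)-t(u))$ collapses the first inequality to $\sum_{u\in B_X}t(u)\le\sum_{u\in Y}(d(u)-t(u))$; symmetrically $\sum_{u\in B_Y}t(u)\le\sum_{u\in X}(d(u)-t(u))$. Adding the two and using $\sum_{u\in V}(d(u)-t(u))=\sum_u d(u)-\sum_u t(u)=2|E(G)|-\sum_u t(u)$ gives $2\sum_{u\in B}t(u)+\sum_{u\in A}t(u)\le 2|E(G)|$. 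Now I invoke the arithmetic identity special to the strict majority rule: $2t(u)-d(u)\in\{1,2\}$ for every $u$, so $\sum_u\big(2t(u)-d(u)\big)=n+n_e$, where $n_e$ is the number of even-degree vertices; equivalently $2|E(G)|=2\sum_u t(u)-n-n_e$. Substituting this into the previous inequality and cancelling makes the $B$-terms disappear, leaving $\sum_{u\in A}t(u)\ge n+n_e$.

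The final step is the conversion to degrees, which is where the exact constant $\frac{2}{\Delta(G)+1}$ (rather than $\frac{2}{\Delta(G)+2}$) has to be earned. I would write $t(u)=\frac{d(u)+1}{2}+\frac12\mathbf 1[d(u)\text{ even}]$, so that $\sum_{u\in A}t(u)\le |A|\frac{\Delta(G)+1}{2}+\frac12 n_e^{A}$, where $n_e^{A}$ counts the even-degree vertices inside $A$; combined with $\sum_{u\in A}t(u)\ge n+n_e\ge n+n_e^{A}$, the half-integer correction terms match and yield $|A|\ge\frac{2n+n_e^{A}}{\Delta(G)+1}\ge\frac{2n}{\Delta(G)+1}$. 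The main obstacle I anticipate is conceptual rather than computational: recognizing that $S$ must absorb a full side of infected vertices to break past the factor-of-two-too-weak bound, and then carrying the even/odd parity bookkeeping carefully enough that the denominator comes out exactly $\Delta(G)+1$ and not $\Delta(G)+2$.
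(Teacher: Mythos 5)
Your proof is correct and follows essentially the same route as the paper's: your sets $S^{*}=B_X\cup Y$ and $S^{**}=X\cup B_Y$ are exactly the paper's $B_Y\cup W$ and $B_X\cup W$ (one full side together with all uninfected vertices), and feeding them into Lemmas \ref{lem:maxdegree} and \ref{lem:maxedge} produces the same core inequality $\sum_{v\in W}(2t(v)-d(v))\leq\sum_{v\in A}(d(v)-t(v))$, which you merely re-express via total threshold mass and parity. The only (harmless) difference is at the end: the parity bookkeeping you anticipated is not actually needed to earn the denominator $\Delta(G)+1$, since the crude estimates $2t(v)-d(v)\geq 1$ and $d(v)-t(v)\leq\frac{\Delta-1}{2}$ give $|W|\leq\frac{\Delta-1}{2}|A|$ and hence $n=|W|+|A|\leq\frac{\Delta+1}{2}|A|$ directly, as the paper does; your version just yields the marginally sharper $|A|\geq\frac{2n+n_e^{A}}{\Delta(G)+1}$.
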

\begin{proof}
Let $f_0$ be an arbitrary $PTS$ for $G$. Partition the vertices of graph $G$ into three subsets $B_X$, $B_Y$ and $W$ as follow.
\begin{equation*}
\begin{split}
B_X & =\{v \in X \, | \, f_0(v)=1 \} \\
B_Y & =\{v \in Y \, | \, f_0(v)=1 \} \\
W & =\{v \in V(G) \, | \, f_0(v)=0 \} \\
\end{split}
\end{equation*}
Consider the induced subgraph of $G$ with vertex set $B_X \cup W$ and suppose that $S\subseteq B_X \cup W$. For every vertex $v\in Y \cap S$, we have $f_0(v)=0$. So By Lemma \ref{lem:maxdegree}, for every $S\subseteq B_X \cup W$ there is at least one vertex $u$ such that $d_S(u)\leq d(u)-t(u)$.  By Lemma \ref{lem:maxedge}, this implies that $G[B_X \cup W]$ has at most $\sum_{u\in B_X\cup W} (d(u)-t(u))$ edges. Similarly we can prove that $G[B_Y \cup W]$ has at most $\sum_{u\in B_Y\cup W} (d(u) - t(u))$ edges. Let $e_{W}$ be the number of edges in $G[W]$, $e_{WX}$ be the number of edges with one end point in $B_X$ and the other end point in $W$ and $e_{WY}$ be the number of edges with one end point in $B_Y$ and the other end point in $W$. we have:
\begin{equation*}
\begin{split}
e_{WX} + e_{W} \leq \sum_{v\in B_X \cup W} (d(v) - t(v)) \\
e_{WY} + e_{W} \leq \sum_{v\in B_Y \cup W} (d(v) - t(v)) \\
\end{split}
\end{equation*}
and so,
\begin{equation*}
\begin{split}
e_{WX}+e_{WY}+2e_{W} \leq \sum_{v\in V(G)} (d(v) - t(v)) + \sum_{v\in W} (d(v) - t(v)) 
\end{split}
\end{equation*}
The total degree of vertices in $W$ is $\sum_{v\in W}d(v)=e_{WX}+e_{WY}+2e_{W}$. Thus
\begin{equation*}
\begin{split}
\sum_{v\in W}d(v) \leq \sum_{v\in V(G)} (d(v) - t(v)) + \sum_{v\in W} (d(v) - t(v))
\end{split}
\end{equation*}
If we denote the set of vertices for which $f_0$ is equal to $1$ by $B$, we have
\begin{equation}\label{eq:lower}
\sum_{v\in W}(2t(v) - d(v)) \leq \sum_{v\in B}(d(v)-t(v))
\end{equation} 
For every vertex $v$, $t(v)\geq\frac{d(v)+1}{2}$, so
\begin{equation*}
\begin{split}
& |W| \leq \sum_{v\in B}\frac{d(v)-1}{2} \Rightarrow |W| \leq \frac{\Delta -1}{2}(|B|) \\
& \Rightarrow |B| \geq \frac{2n}{\Delta +1} \\
\end{split}
\end{equation*}
And the proof is complete.
\end{proof}

We now show that the bound in Theorem \ref{thm:tightlower} is tight.
\begin{lemma} \label{thm:tightlower}
For infinitely many $n$'s, there exists a $2d+1$-regular graph with $n$ vertices such that $NPPTS(G)= \frac{n}{d+1}$ under strict majority rule (see appendix \ref{app:general_proofs}).
\end{lemma}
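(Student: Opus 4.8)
The lower bound half is immediate and needs no new work: a $2d+1$-regular graph has $\Delta(G)=2d+1$, so Theorem~\ref{thm:lowernppts} together with the reduction in Lemma~\ref{thm:bip_to_gen} yields $NPPTS(G)\ge \frac{2n}{\Delta(G)+1}=\frac{n}{d+1}$ for \emph{every} such $G$. Thus the entire content of the statement is the matching \emph{upper} bound, i.e.\ producing a $2d+1$-regular graph that carries a perfect target set of size exactly $\frac{n}{d+1}$. I would first dispose of the ``infinitely many $n$'' clause: since perfect target sets of different connected components do not interact, $NPPTS$ is additive over disjoint unions, so it suffices to build a \emph{single} connected $2d+1$-regular gadget $H$ with $NPPTS(H)=|V(H)|/(d+1)$; then $\ell$ disjoint copies give a $2d+1$-regular graph on $n=\ell\,|V(H)|$ vertices whose optimum is $n/(d+1)$, and the lower bound above certifies it is optimal.

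Before designing $H$ I would pin down two structural constraints that the extremal density forces, because they both guide and constrain the construction. First, $H$ must be \emph{non-bipartite}: rerunning the proof of Theorem~\ref{thm:lowernppts} on a regular bipartite graph, every per-vertex estimate is tight but the edge count coming from Lemma~\ref{lem:maxedge} is not, since equality there would require a vertex with $d(v)=t(v)$, whereas $d(v)-t(v)=d>0$ throughout; hence the bound is \emph{strict} for bipartite regular graphs and only odd cycles can recover the lost slack. Second, the flooding cannot be \emph{monotone}: in the non-progressive model an initially infected vertex survives round one only if it already has $\ge d+1$ infected neighbours, so a monotone spread forces the seed to be internally $(d+1)$-supporting, which (by a geometric-series count over the resulting layers) makes the seed density strictly exceed $\frac{1}{d+1}$. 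Attaining the exact extremal density therefore requires exploiting the genuinely non-progressive feature that some seed vertices switch off and are re-infected later.

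With these in mind, my plan for $H$ is a \emph{cascading} design with $n=(d+1)m$ vertices partitioned into $m$ cells of size $d+1$, one seed per cell, so that $|B|=m=n/(d+1)$. The cells would be arranged in layers with a deliberately ``backward-heavy'' wiring---each frontier vertex receiving $d+1$ edges from the already-active side and only $d$ edges forward---together with a small non-bipartite (triangle-containing) core, the whole graph kept $2d+1$-regular by balancing forward and backward edge counts between consecutive layers. To verify that $B$ is a perfect target set I would track the synchronous strict-majority dynamics and show it reaches the (always stable) all-$1$ configuration: rather than a simple ``infected set only grows'' invariant, I would use a potential argument showing that over each short cycle of rounds the active frontier advances by a full layer even though individual vertices may toggle, until every vertex is simultaneously on. Counting then gives $|B|=n/(d+1)$, matching the lower bound and yielding $NPPTS(H)=|V(H)|/(d+1)$.

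The step I expect to be the real obstacle is precisely reconciling the extremal density $\frac{1}{d+1}$ with the non-progressive update. Because vertices can revert to $0$, the danger is not failure to spread but convergence to a period-two oscillation---exactly what happens for $K_{2d+2}$, $K_{3,3}$, the prism and the cube, none of which meets the bound---so the crux is to engineer the degree-balanced, non-bipartite, non-monotone wiring so that the strict-majority process started from a seed of density \emph{exactly} $\frac{1}{d+1}$ drifts toward all-$1$ instead of cycling. Equivalently, one must make every inequality in the lower-bound argument hold with equality while keeping the graph regular; getting this balance right, rather than merely $O\!\left(\frac{n}{d+1}\right)$, is where almost all of the effort lies and is the part I would most carefully check.
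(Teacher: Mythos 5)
Your proposal never actually produces the graph, and the graph is the entire content of the lemma. The first two paragraphs are correct but routine (the lower bound is Theorem~\ref{thm:lowernppts}, and additivity over disjoint copies disposes of ``infinitely many $n$''), while the third is a design brief --- ``cascading cells, backward-heavy wiring, a non-bipartite core, a potential argument'' --- with no vertices, no edges and no verified dynamics; you yourself flag the construction as the step most likely to fail, which is exactly where a proof has to start. For comparison, the paper's proof is a concrete layered construction: take a $(d+1)$-regular graph $G_1$ as the seed and repeatedly attach a layer $G_{i+1}$ of $m_{i+1}=\frac{d}{d+1}m_i$ vertices, each joined by $d+1$ edges to $G_i$, so that each vertex of $G_i$ receives exactly $d$ new edges. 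Every vertex then has $d+1$ backward (or internal, for $G_1$) neighbours and $d$ forward ones, the threshold is $d+1$, and the infection spreads \emph{monotonically} layer by layer from $V(G_1)$. This contradicts both of your ``structural constraints'': the extremal example can be monotone, and since the double cover of Lemma~\ref{thm:bip_to_gen} preserves degrees, seed density and the PTS property, any tight example can be made bipartite, so non-bipartiteness buys nothing.

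That said, your instinct that a monotone layered scheme only \emph{approaches} density $\frac{1}{d+1}$ is right, and the slack you half-notice in Lemma~\ref{lem:maxedge} cuts deeper than you let it. The paper's identity $n=\sum_{i=1}^{\infty}m_i=m_1(d+1)$ requires infinitely many layers; any finite truncation leaves the last layer with $d$ unmatched edges per vertex and gives $m_1>\frac{n}{d+1}$ strictly. Moreover the slack is not a bipartiteness artifact that odd cycles could repair: the hypothesis of Lemma~\ref{lem:maxedge}, applied to $G[B_X\cup W]$ in a $(2d+1)$-regular graph, says precisely that this induced subgraph is $d$-degenerate, and a $d$-degenerate graph on $N\geq 1$ vertices has strictly fewer than $dN$ edges (the tail of the elimination order cannot keep contributing $d$ edges each). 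Feeding this back into the proof of Theorem~\ref{thm:lowernppts} gives $|W|<d|B|$ strictly whenever $W\neq\emptyset$, hence $NPPTS(G)>\frac{n}{d+1}$ for \emph{every} $(2d+1)$-regular graph with $d\geq 1$, bipartite or not, monotone dynamics or not. So the non-monotone, non-bipartite gadget you propose to hunt for does not exist; exact equality is unattainable for $d\geq 1$ and only the asymptotic statement $NPPTS(G)=(1+o(1))\frac{n}{d+1}$ can be saved. The concrete verdict on your attempt is therefore twofold: it is incomplete because the construction is missing, and it cannot be completed as stated, because the very equality conditions you say you would ``most carefully check'' are mutually inconsistent.
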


\noindent {\bf Upper bound.} 
In this section, we present a greedy algorithm 
which gives an upper bound for $NPPTS(G)$. 
\begin{algorithm}
\caption{Greedy NPPTS}\label{alg:greedy}
\begin{algorithmic}
\STATE sort the vertices in $G$ in ascending order of their degrees as the sequence $v_1,\ldots , v_n$.
\FOR{$i=1$ to $n$}
	\STATE $\mathtt{whiteadj}[v_i]=0$
	\STATE $\mathtt{blocked}[v_i]=0$
\ENDFOR
\FOR{$i=1$ to $n$}
\FOR {each $u \in N(v_i)$}
	\IF {$\mathtt{whiteadj}[u]=d(u)-t(u)$}
		\STATE $\mathtt{blocked}[v_i]=1$
	\ENDIF
\ENDFOR
\IF {$(\mathtt{blocked}[v_i] = 1)$}
        \STATE $f_0(v)=1$
\ELSE
        \STATE $f_0(v)=0$
       	\FOR {each $u \in N(v_i)$}
       		\STATE $\mathtt{whiteadj}[u]+=1$
       	\ENDFOR
\ENDIF
\ENDFOR

\end{algorithmic}
\end{algorithm}
\begin{theorem}\label{thm:uppernppts}
For every graph $G$ of order $n$, $NPPTS(G)\leq \frac{n\Delta(\delta+2)}{4\Delta+(\Delta+1)(\delta-2)}$ under strict majority threshold.
\end{theorem}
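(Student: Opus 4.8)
The plan is to show that Algorithm~\ref{alg:greedy} always outputs a perfect target set, and then to bound the number of vertices it colors black (those $v$ with $f_0(v)=1$) by a charging argument that exploits the ascending‑degree processing order. Since the greedy run gives a valid PTS, its cost is an upper bound on $NPPTS(G)$.

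First I would verify correctness. The quantity $\mathtt{whiteadj}[u]$ records the number of already‑processed neighbors $v$ of $u$ with $f_0(v)=0$. The test $\mathtt{whiteadj}[u]=d(u)-t(u)$ guarantees that a vertex is colored white only when every neighbor $u$ still has strictly fewer than $d(u)-t(u)$ white neighbors; hence $\mathtt{whiteadj}[u]$ never exceeds $d(u)-t(u)$, and at termination every vertex $u$ satisfies $d_W(u)\le d(u)-t(u)$, i.e.\ it has at least $t(u)$ black neighbors. Consequently $f_1(v)=1$ for all $v$, and since the all‑ones configuration is a fixed point of the strict majority rule, $f_0$ is a PTS. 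Thus $NPPTS(G)$ is at most the number $b$ of black vertices produced; writing $w=n-b$ for the number of white vertices, it remains to upper bound $b$.

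Call a vertex $u$ \emph{full} if at termination $d_W(u)=d(u)-t(u)$; only full vertices can trigger the blocking test, and once $\mathtt{whiteadj}[u]$ reaches $d(u)-t(u)$ it never changes. Every black vertex $v$ was blocked by a neighbor that was already full when $v$ was processed; fix one such neighbor $\phi(v)=u$. Then $v$ is a neighbor of $u$ processed after $u$ became full, and every such ``post‑full'' neighbor of $u$ is black. Since $u$ has exactly $d(u)-t(u)$ white neighbors (all processed before it became full), at most $t(u)$ of its neighbors are processed afterwards, so $|\phi^{-1}(u)|\le t(u)$. Summing over full vertices, $b\le\sum_{u\ \text{full}}t(u)$.

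Finally I would convert this into the claimed bound. Using $t(u)=(d(u)-t(u))+(2t(u)-d(u))$ together with $2t(u)-d(u)\le 2$ gives $t(u)\le (d(u)-t(u))+2$, so with $f$ the number of full vertices, $b\le \sum_{u\ \text{full}}(d(u)-t(u))+2f$. Each full vertex contributes $d_W(u)=d(u)-t(u)$, and $\sum_{u}d_W(u)=\sum_{v\in W}d(v)\le w\Delta$, whence $\sum_{u\ \text{full}}(d(u)-t(u))\le w\Delta$. Moreover each full vertex has $d(u)-t(u)\ge \tfrac{\delta-2}{2}$ white neighbors, so $f\cdot\tfrac{\delta-2}{2}\le w\Delta$ and $f\le \tfrac{2w\Delta}{\delta-2}$ (which needs $\delta>2$). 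Combining, $b\le w\Delta\bigl(1+\tfrac{4}{\delta-2}\bigr)=w\Delta\,\tfrac{\delta+2}{\delta-2}$, i.e.\ $b(\delta-2)\le w\Delta(\delta+2)$; substituting $w=n-b$ and solving for $b$ yields $b\le \frac{n\Delta(\delta+2)}{4\Delta+(\Delta+1)(\delta-2)}$, as $4\Delta+(\Delta+1)(\delta-2)=\Delta\delta+2\Delta+\delta-2$. The main obstacle is the charging step: arguing that each full vertex is responsible for at most $t(u)$ black vertices and that this accounting covers \emph{all} black vertices; the degree‑sorted order is exactly what keeps the post‑full neighbors under control. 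One should also note that the bound is meaningful only for $\delta>2$, which is consistent with the form of the stated expression.
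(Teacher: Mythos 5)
Your proposal is correct and follows essentially the same route as the paper: it analyzes the same greedy algorithm, bounds the set of ``full'' vertices (the paper's $S$ with $\mathtt{whiteadj}[u]=d(u)-t(u)$) by $\frac{2\Delta}{\delta-2}|W|$, charges each black vertex to a full neighbor to get $|B|\le\sum_{u\in S}t(u)$, and solves the resulting inequality with $|W|=n-|B|$. Your correctness argument (every vertex ends with at least $t(u)$ black neighbors, so $f_1\equiv 1$ and all-ones is a fixed point) is a bit more direct than the paper's induction, and you make explicit the $\delta>2$ caveat that the paper leaves implicit, but the substance is the same.
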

Algorithm \ref{alg:greedy} guarantees this upper bound. This algorithm gets a graph $G$ of order $n$ and the threshold function $t$ as input and determines the values of $f_0$ for each vertex. 
\begin{lemma} \label{lem:greedy_correctness}
The algorithm Greedy NPPTS finds a Perfect Target Set for non-progressive spread of influence. 
\end{lemma}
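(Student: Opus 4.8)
The plan is to show that the initial assignment $f_0$ produced by the algorithm already makes every vertex infected after a single step, so that $f_1 \equiv 1$; since all-$1$'s is a fixed point of the process, this certifies that $f_0$ is a perfect target set. Throughout I assume $G$ has no isolated vertex (otherwise no PTS can exist, since a degree-$0$ vertex has $f_\tau \equiv 0$ for all $\tau \geq 1$), which also guarantees $d(u) - t(u) \geq 0$ for every $u$, because $t(u) = \lceil (d(u)+1)/2 \rceil \leq d(u)$ whenever $d(u) \geq 1$.

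The heart of the argument is the following invariant, which I would prove by induction on the number of processed vertices: at every point during the main loop, $\mathtt{whiteadj}[u] \leq d(u) - t(u)$ for all $u$. Initially every counter equals $0 \leq d(u) - t(u)$. When $v_i$ is processed and colored white, this happens precisely because no neighbor $u$ of $v_i$ satisfied the blocking test $\mathtt{whiteadj}[u] = d(u) - t(u)$; combined with the inductive hypothesis $\mathtt{whiteadj}[u] \leq d(u) - t(u)$, integrality forces $\mathtt{whiteadj}[u] \leq d(u) - t(u) - 1$ for each such neighbor, so after the increments we still have $\mathtt{whiteadj}[u] \leq d(u) - t(u)$. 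When $v_i$ is colored black no counter changes, so the invariant is trivially preserved. Since each vertex is processed exactly once and $\mathtt{whiteadj}[u]$ is incremented exactly once for every neighbor of $u$ that gets colored white, at termination $\mathtt{whiteadj}[u]$ equals the number of neighbors $w$ of $u$ with $f_0(w) = 0$.

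Combining these facts, at the end the number of uninfected neighbors of each $u$ is $\mathtt{whiteadj}[u] \leq d(u) - t(u)$, hence the number of infected neighbors is at least $d(u) - (d(u)-t(u)) = t(u)$, i.e. $\sum_{w \in N(u)} f_0(w) \geq t(u)$. By the definition of the non-progressive strict majority rule this yields $f_1(u) = 1$ for every $u$. Finally $f_1 \equiv 1$ is a steady state, since then each $u$ has all $d(u) \geq t(u)$ of its neighbors infected, so $f_\tau \equiv 1$ for every $\tau \geq 1$; thus the process converges to all $1$'s and $f_0$ is a perfect target set.

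The proof is essentially routine once the invariant is isolated; the only genuine subtlety is the bookkeeping around equality versus strict inequality in the blocking test (the test checks $\mathtt{whiteadj}[u] = d(u) - t(u)$, and it is the invariant that lets us upgrade this to a strict inequality just before the increment), together with the boundary behavior of the threshold for low-degree and isolated vertices. I emphasize that the ascending-degree ordering of the vertices plays no role in correctness — the invariant holds for any processing order — and is needed only later to bound the cardinality of the target set in Theorem \ref{thm:uppernppts}.
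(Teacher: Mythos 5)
Your proof is correct, and it rests on the same key observation as the paper's: a vertex is colored white only when none of its neighbors has yet reached its quota of $d(u)-t(u)$ white neighbors, so at termination every vertex $u$ has at most $d(u)-t(u)$ uninfected neighbors and hence at least $t(u)$ infected ones. The packaging, however, is different in a way worth noting. The paper runs an induction over the iterations of the algorithm with the invariant ``the partial assignment, extended by $1$'s on the undetermined vertices, is a PTS,'' and at each step argues that one round of propagation re-infects the initial set plus the newly whitened vertex; to conclude ``all vertices will be infected eventually'' from the induction hypothesis this implicitly leans on the monotonicity and memorylessness of the dynamics. You instead prove the purely combinatorial invariant $\mathtt{whiteadj}[u]\leq d(u)-t(u)$ on the counters, note that at termination $\mathtt{whiteadj}[u]$ counts the white neighbors of $u$, and deduce directly that $f_1\equiv 1$, which is a fixed point. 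This is cleaner and slightly stronger (it shows convergence to all-$1$'s in a single step, with no reasoning about the long-run behavior of the process), and your side remarks --- that isolated vertices must be excluded since a degree-$0$ vertex can never be infected at $\tau\geq 1$, and that the ascending-degree ordering is irrelevant to correctness and only matters for the cardinality bound of Theorem \ref{thm:uppernppts} --- are both accurate and are not made explicit in the paper.
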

\begin{proof}
By induction on the number of vertices for which $f_0$ is determined, we prove that $f_0$ remains a PTS after each step of algorithm if we assume that $f_0$ is $1$ for undetermined values.
It is clear that the claim is true at the beginning. Consider a set of values of $f_0$ which forms a PTS and let $v$ be a vertex for which value of $f_0(v)$ is set to $0$ by the algorithm in the next step. By induction hypothesis, $f_0$ is a PTS if $f_0(v)$ is assumed to be $1$. According to the algorithm, $f_0(v)$ is set to $0$ iff the value of $\mathtt{blocked}[v]$ is zero i.e. no adjacent vertex of $v$, say $u$, has exactly $d(u)-t(u)$ adjacent initially uninfected vertices. So by setting $f_0(v)$ to $0$, each initially infected vertex $w$ still has at least $t(w)$ infected vertices and also $v$ has at least $t(v)$ initially infected neighbors itself. Thus, after one step of propagation, all initially infected vertices plus $v$ are infected and by induction hypothesis, all vertices will be infected eventually and so $f_0$ remains a PTS.

\end{proof}
\begin{lemma} \label{lem:gupperbound}
 For every graph $G$ of order $n$, Greedy NPPTS guarantees the upper bound of $\frac{n\Delta(\delta+2)}{4\Delta+(\Delta+1)(\delta-2)}$ for $NPPTS(G)$ under strict majority threshold where $\Delta$ and $\delta$ are maximum and minimum degree of vertices respectively. 
\end{lemma}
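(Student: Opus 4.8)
The plan is to invoke Lemma~\ref{lem:greedy_correctness}, which guarantees that the output of Algorithm~\ref{alg:greedy} is a perfect target set, so it suffices to bound its cost. Let $B=\{v:f_0(v)=1\}$ be the set of vertices the algorithm sets to $1$ (the \emph{black} vertices) and $W=\{v:f_0(v)=0\}$ the \emph{white} ones, with $|W|=n-|B|$; I will show directly that $|B|\le\frac{n\Delta(\delta+2)}{4\Delta+(\Delta+1)(\delta-2)}$. A quick simplification gives $1-\frac{\Delta(\delta+2)}{4\Delta+(\Delta+1)(\delta-2)}=\frac{\delta-2}{4\Delta+(\Delta+1)(\delta-2)}$, so the claimed quantity is at least $n$ whenever $\delta\le2$; since $|B|\le n$ trivially, I may assume $\delta\ge3$.

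First I would extract the two invariants maintained by the loop. Let $\mathtt{whiteadj}[u]$ denote its final value. Because the counter of $u$ is incremented exactly once for each neighbour of $u$ that is set white, and because a vertex is set white only when none of its neighbours $u$ has yet reached $\mathtt{whiteadj}[u]=d(u)-t(u)$, the counter never exceeds $d(u)-t(u)$ and, once it attains this value, it is never incremented again. Hence in the final assignment $\mathtt{whiteadj}[u]=d_W(u)\le d(u)-t(u)$ for all $u$. Call $u$ \emph{saturated} if $d_W(u)=d(u)-t(u)$ and let $W^\ast$ be the set of saturated vertices. A vertex is set black exactly when it is \emph{blocked}, i.e.\ when at processing time some neighbour already has $\mathtt{whiteadj}=d(u)-t(u)$; by the monotonicity just noted, that neighbour is saturated in the final assignment, so \emph{every black vertex has a neighbour in $W^\ast$}.

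The heart of the proof is a two-sided count on $W^\ast$. For saturated $u$ we have $d_B(u)=d(u)-d_W(u)=t(u)$, and since every black vertex is a neighbour of some $u\in W^\ast$,
\begin{equation}\label{eq:Bcount}
|B|\;\le\;\sum_{u\in W^\ast}d_B(u)\;=\;\sum_{u\in W^\ast}t(u).
\end{equation}
In the other direction, $\sum_{u\in V}d_W(u)=\sum_{w\in W}d(w)\le\Delta\,|W|$, so restricting the sum to $W^\ast$ yields
\begin{equation}\label{eq:Wcount}
\sum_{u\in W^\ast}\bigl(d(u)-t(u)\bigr)\;=\;\sum_{u\in W^\ast}d_W(u)\;\le\;\Delta\,|W|.
\end{equation}
Finally I would bound $t(u)$ by a multiple of $d(u)-t(u)$. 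With $t(u)=\lceil\frac{d(u)+1}{2}\rceil$ the ratio $\frac{t(u)}{d(u)-t(u)}$ is not monotone in $d(u)$ — it is larger at even degrees — but a short parity check shows its maximum over $d(u)\ge\delta\ge3$ is attained at the smallest even degree $\ge\delta$ and is at most $\frac{\delta+2}{\delta-2}$; hence $t(u)\le\frac{\delta+2}{\delta-2}\,(d(u)-t(u))$ for every $u$. Substituting this into~\eqref{eq:Bcount} and then applying~\eqref{eq:Wcount} gives $|B|\le\frac{\Delta(\delta+2)}{\delta-2}\,|W|$, and replacing $|W|=n-|B|$ and solving for $|B|$ produces exactly the claimed bound, because $(\delta-2)+\Delta(\delta+2)=4\Delta+(\Delta+1)(\delta-2)$.

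Two steps need care. The first is the counter invariant underlying~\eqref{eq:Bcount}: one must argue that a counter, once equal to $d(u)-t(u)$, can never increase again, so that ``blocked at processing time'' coincides with ``saturated in the final assignment'' and a saturated $u$ has \emph{exactly} $t(u)$ black neighbours. The second, which I expect to be the main obstacle, is the ratio estimate: because $\frac{t(u)}{d(u)-t(u)}$ oscillates with the parity of $d(u)$, the extremal degree is not $\delta$ itself but the least even integer $\ge\delta$, and one must verify for both parities of $\delta$ that this worst case is still $\le\frac{\delta+2}{\delta-2}$. Everything else is bookkeeping.
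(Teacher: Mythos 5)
Your proof is correct and follows essentially the same route as the paper's: you identify the saturated vertices (the paper's set $S$), observe that every black vertex has a saturated neighbour so $|B|\le\sum_{u\in S}t(u)$, double-count edges into $W$ to get $\sum_{u\in S}(d(u)-t(u))\le\Delta|W|$, and combine via a bound on $t(u)$ in terms of $d(u)-t(u)$, arriving at the same intermediate inequality $|B|\le\frac{\Delta(\delta+2)}{\delta-2}|W|$. Your explicit treatment of the case $\delta\le 2$ and the parity check on the ratio $\frac{t(u)}{d(u)-t(u)}$ are small tightenings the paper glosses over, but the argument is the same one.
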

\begin{proof}
According to the algorithm, for each vertex $v$, the value of $f_0(v)$ is set to $1$ iff $\mathtt{whiteadj}[u]=d(u)-t(u)$ for some $u\in N(v)$. Let $S$ be the set of vertices $u$ for which $\mathtt{whiteadj}[u]=d(u)-t(u)$. $B$ and $W$ denote the set of infected and uninfected vertices respectively. We have:

\begin{equation*}
\begin{split}
& \sum_{v\in S}(d(v) - t(v))\leq \sum_{v\in W}d(v) \Rightarrow \sum_{v\in S}(\frac{d(v)}{2}-1) \leq \sum_{v\in W}d(v)  \\
\end{split}
\end{equation*}
Therefore, 
\begin{equation*}
\begin{split}
&  (\frac{\delta}{2}-1)|S| \leq \Delta|W|. \Rightarrow |S|\leq \frac{2\Delta}{\delta-2}|W| \\\
\end{split}
\end{equation*}
Each vertex in $B$ has at least one adjacent vertex in $S$ and each vertex $v\in S$ has at least $d(v)-t(v)$ adjacent edges to $W$ and so at most $t(v)$ adjavent edges to $B$, thus:
\begin{equation*}
\begin{split}
 |B| & \leq \sum_{v\in S}(t(v)) \leq \sum_{v \in S}(\frac{d(v)}{2}+1) \leq 2|S|+\sum_{v\in W}d(v) \\
& \leq 2|S| + \Delta|W| \leq (2\frac{2\Delta}{\delta-2}+\Delta)|W| \leq \frac{4\Delta+\Delta(\delta-2)}{\delta-2}|W| \\
\end{split}
\end{equation*}
Thus,
\begin{equation*}
\begin{split}
&  |B|\leq \frac{\Delta(\delta+2)}{4\Delta+(\Delta+1)(\delta-2)}n \\
\end{split}
\end{equation*}

\end{proof}

The approximation factor of the algorithm follows from previous claim and the lower bound provided by Theorem \ref{thm:lowernppts}:

\begin{corollary}
The Greedy NPPTS algorithm is a $\frac{\Delta(\Delta+1)(\delta+2)}{8\Delta+2(\Delta+1)(\delta-2)}$ approximation algorithm for $NPPTS$ problem.
\end{corollary}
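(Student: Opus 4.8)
The plan is to read off the approximation ratio directly as the quotient of an upper bound on the cost of the set produced by Greedy NPPTS and a lower bound on the optimum cost $NPPTS(G)$. Since the approximation factor of any algorithm is $\mathrm{cost}(\text{ALG})/\mathrm{OPT}$, it suffices to bound the numerator from above and the denominator from below by quantities already in hand, and then simplify their ratio into the claimed closed form.

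First I would invoke Lemma \ref{lem:greedy_correctness} to guarantee that the assignment $f_0$ returned by the algorithm is a genuine perfect target set, so that its cost is a legitimate measure of the algorithm's output (and in particular a valid upper bound on nothing less than the optimum need be verified separately). By Lemma \ref{lem:gupperbound}, this cost is at most $\frac{n\Delta(\delta+2)}{4\Delta+(\Delta+1)(\delta-2)}$. For the denominator, Theorem \ref{thm:lowernppts} gives $NPPTS(G)\geq \frac{2n}{\Delta+1}$ for bipartite graphs; to apply this to an arbitrary input graph I would pass through Lemma \ref{thm:bip_to_gen}, which lifts any bipartite lower bound of the form $\alpha|V(H)|$ to the same bound on all graphs, so that $NPPTS(G)\geq \frac{2n}{\Delta+1}$ holds in general.

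Dividing the upper bound by the lower bound then yields
\[
\frac{\mathrm{cost}(\text{ALG})}{NPPTS(G)}\leq \frac{\dfrac{n\Delta(\delta+2)}{4\Delta+(\Delta+1)(\delta-2)}}{\dfrac{2n}{\Delta+1}} = \frac{\Delta(\Delta+1)(\delta+2)}{8\Delta+2(\Delta+1)(\delta-2)},
\]
where the $n$ cancels, the factor $\Delta+1$ moves into the numerator, and the $2$ multiplies through the denominator to produce $8\Delta+2(\Delta+1)(\delta-2)$.

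I do not expect any serious obstacle here: the argument is essentially a single division of two previously established bounds. The only point requiring slight care is the appeal to Lemma \ref{thm:bip_to_gen}, which ensures that the lower bound of Theorem \ref{thm:lowernppts}, proved only for bipartite graphs, actually applies to the general graph on which Greedy NPPTS is executed; without this step the numerator (a general upper bound) and the denominator (a bipartite lower bound) would not refer to the same class of inputs, and the ratio would not be a valid approximation guarantee.
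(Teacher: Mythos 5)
Your proposal is correct and matches the paper's (essentially unstated) argument: the corollary is obtained exactly by dividing the upper bound of Lemma \ref{lem:gupperbound} on the greedy solution's cost by the lower bound $\frac{2n}{\Delta+1}$ on $NPPTS(G)$ from Theorem \ref{thm:lowernppts} (lifted to general graphs via Lemma \ref{thm:bip_to_gen}), and the algebra checks out. Your explicit attention to the bipartite-to-general lifting step is a point the paper glosses over but is indeed needed.
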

\noindent {\bf NP-Hardness.} 
In this section, we use a reduction from the Minimum Dominating Set problem (MDS) \cite{allan1978domination} to prove the NP-hardness of computing $NPPTS(G)$.
The proof of following theorem is provided in appendix \ref{app:general_proofs}.
\begin{theorem} \label{thm:nphardness}
If there exists a polynomial-time algorithm for computing $NPPTS(G)$ for a given graph $G$ under the strict majority threshold, then $P = NP$. 
\end{theorem}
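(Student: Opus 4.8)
The plan is to establish NP-hardness by a polynomial-time reduction from the Minimum Dominating Set problem (MDS), which is NP-hard \cite{allan1978domination}; a polynomial-time procedure computing $NPPTS(G)$ exactly would then decide MDS in polynomial time and hence force $P=NP$. Given an MDS instance $H$, I would construct in polynomial time an auxiliary graph $G$ whose minimum perfect target set encodes a minimum dominating set of $H$, namely so that $H$ has a dominating set of size at most $k$ if and only if $NPPTS(G)\le k+c$ for a fixed overhead $c$ that depends only on the construction and is computable from $H$.

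The core of the construction is a threshold-shifting gadget. For each vertex $v$ of $H$ I would keep the original adjacencies of $H$ and, in addition, attach to $v$ a block of auxiliary vertices (arranged as a self-sustaining clique-like structure) whose size is calibrated against $d_H(v)$. The purpose of the calibration is to turn the strict-majority rule at $v$ into a domination test: with the right number of permanently-infected auxiliary neighbors, $v$ reaches its threshold $t(v)=\lceil (d(v)+1)/2\rceil$ exactly when at least one of its original $H$-neighbors is infected. Thus, modulo the auxiliary vertices, propagation on the $H$-part of $G$ mimics the rule ``a vertex turns on iff it is dominated.'' The auxiliary blocks are designed to be robustly self-infected, so that once their vertices are on the majority rule keeps them on irrespective of the single link to $v$; consequently they must belong to every perfect target set and contribute exactly the fixed overhead $c$.

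For the equivalence I would argue both directions. Forward: given a dominating set $D$ of $H$, infect the $D$-vertices together with all auxiliary vertices and show that the dynamics converge to the all-ones fixed point. Converse: given any perfect target set of $G$, its restriction to the $H$-vertices must dominate $H$, since a vertex of $H$ with no infected original neighbor never meets its shifted threshold and so cannot be stabilized at state $1$, contradicting convergence to all-ones. Combining the two directions yields the claimed correspondence between $NPPTS(G)$ and the minimum dominating set size of $H$, and hence the theorem.

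The main obstacle, and the feature separating this from the progressive setting, is controlling the \emph{non-progressive} dynamics: infected vertices may switch off, and the strict-majority open-neighborhood rule is prone to period-two oscillations on bipartite structures, so a dominating target set could converge to a $2$-cycle rather than to all-ones. The gadgets therefore have to do double duty, calibrating thresholds to domination while simultaneously breaking parity and making the infected cores self-sustaining, so that reaching the all-ones steady state is genuinely equivalent to domination. Verifying that the construction avoids oscillation and that the overhead $c$ is independent of the chosen dominating set is where the bulk of the careful (but otherwise routine) case analysis lies.
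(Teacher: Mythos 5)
Your overall strategy---a polynomial reduction from Minimum Dominating Set so that an exact algorithm for $NPPTS$ would decide MDS---is exactly the paper's, and the forward direction (a dominating set plus the fixed auxiliary overhead yields a perfect target set) is fine in outline. The gap is in the converse direction, and it is not a routine detail: you assert that a vertex of $H$ with no \emph{initially} infected original neighbor never meets its shifted threshold, so the restriction of any perfect target set to the $H$-vertices must dominate $H$. In the non-progressive model this inference is false for the gadget as you describe it. A neighbor $u$ of $v$ that is not in the target set can itself become infected at time $1$ (because $u$ is dominated), after which $u$ can push $v$ over its threshold at time $2$, and so on; what you actually get is an iterated domination closure, not domination by the initial set. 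So a single calibrated auxiliary block per vertex does not make ``perfect target set'' equivalent to ``dominating set,'' and the claimed identity $NPPTS(G)=|D|+c$ does not follow.

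The paper's construction is built precisely to block this leakage. It uses \emph{two} copies of the vertex set: a layer $X_4=\{w_i\}$ of vertices to be dominated and a layer $X_5=\{v_i\}$ of dominator candidates joined by closed-neighborhood edges, with each $v_i$ padded by extra degree-one-ish vertices $X_6$ and each $w_i$ fed by a globally sustained core ($g_1,g_2$, $X_1,X_2,X_3$). An exchange argument normalizes the target set so that the $X_6$-padding of an unselected $w_k$ is initially off, and then a parity induction shows that if no $X_5$-neighbor of $w_k$ is initially selected, then $w_k$ and its padding are off at every even time and the relevant $v_p$'s are off at every odd time, so $w_k$ can never stabilize at $1$. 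That two-layer structure plus the even/odd bookkeeping is the actual content of the hardness proof; your sketch identifies the obstacle (oscillation and self-sustainability) but does not supply the mechanism that overcomes it, and the one mechanism you do sketch would not. To repair the proposal you would need to redesign the gadget so that the dominators can only ever be activated by being initially selected---which is essentially what the paper's $X_5/X_6$ padding achieves---and then carry out the parity induction explicitly.
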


\section{Non-Progressive Spread of Influence in Power-law graphs}

In this section, we investigate the non-progressive spread of influence in power-law graphs, and show that the greedy algorithm presented in the previous section is indeed a constant-factor approximation algorithm for power-law graphs. 
For each natural number $x$, we assume that the number of vertices with degree $x$ is proportional to $x^{-\gamma}$ and use $\alpha$ as the normalization coefficient. The value of $\gamma$, known as power-law coefficient, is known to be between $2$ and $3$ in real-world social networks 
. We denote the number of vertices of degree $x$ by $P(x)=\alpha x^{-\gamma}$. Let $n$ be the number of vertices of graph, so we have:
\begin{equation*}
\begin{split}
&  n = \sum_{x= 1}^{\infty} \alpha x^{-\gamma} = \alpha \zeta(\gamma) \Rightarrow \alpha = \frac{n}{\zeta(\gamma)}, \\
\end{split}
\end{equation*}

\noindent where $\zeta$ is the Riemann Zeta function \cite{ivic1985riemann}. 


\noindent {\bf Lower bound.} 
Consider a power-law graph $G$ with a threshold function $t$ and a perfect target set $f_0$. Denoting the set of initially influenced vertices by $B$ and the rest of the vertices by $W$, from the Equation \ref{eq:lower}, we have:
\begin{equation*}
\begin{split}
& \sum_{v\in W}(2t(v) - d(v)) \leq \sum_{v\in B}(d(v)-t(v)) . \\
\end{split}
\end{equation*}
The maximum cardinality of $W$ is achieved when the degree of all vertices in $B$ is greater than or equal to the degree of all vertices in $W$. In this case, assume that the minimum degree of vertices in $B$ is $k$ and $0\leq p \leq 1$ is the proportion of the vertices of degree $k$ in $B$, so under strict majority threshold we have:
\begin{equation*}
\begin{split}
& \sum_{x=1}^{k-1} \alpha x^{-\gamma} +(1-p)\alpha k^{-\gamma} \leq |W| \leq \sum_{v\in W}(2t(v) - d(v))  \\
& \leq \sum_{v\in B}(d(v)-t(v)) \leq \sum_{x=k+1}^{\infty} \alpha x^{-\gamma}(\frac{x-1}{2}) + p\alpha k^{-\gamma}\frac{k-1}{2} \\
\Rightarrow &  \sum_{x=1}^{k-1} x^{-\gamma} +(1-p)k^{-\gamma} \leq \frac{\sum_{x=k+1}^{\infty} (x^{1-\gamma} - x^{-\gamma}) +pk^{-\gamma}(k-1)}{2} \\
\Rightarrow &  \zeta(\gamma)-\zeta(\gamma,k-1) + (1-p)k^{-\gamma} \\
& \leq \frac{\zeta(\gamma-1,k) - \zeta(\gamma,k)+pk^{-\gamma}(k-1)}{2}. \\
\end{split}
\end{equation*}
By estimating the value of Riemann Zeta function, we can estimate the upper bound of $k$ and lower bound of $p$ for that $k$ to provide a lower bound for $|B|$. Assuming that we have the maximum possible value of $k$ and minimum value of $p$ for that $k$, then:
\begin{equation*}
\begin{split}
& |B| \geq \sum_{x=k+1}^{\infty}\alpha x^{-\gamma} + \alpha p k^{-\gamma} = \frac{\zeta(\gamma,k)+pk^{-\gamma}}{\zeta(\gamma)}n. \
\end{split}
\end{equation*}
The estimated values of lower bound  for $2 \leq \gamma \leq 2.8$ is shown in Figure \ref{fig:chart}.

\noindent {\bf Upper bound} 
Suppose that one has run Greedy NPPTS algorithm under strict majority threshold on a graph with power-law degree distribution. The following theorem shows that unlike general graphs, the Greedy NPPTS algorithm guarantees a constant factor upper bound on power-law graphs.

\begin{theorem} Algorithm Greedy NPPTS initially influences at most $(1+\frac{1}{2^{\gamma+1}}-\frac{1}{2\zeta(\gamma)})n$ vertices under the 
strict majority threshold on a power-law graphs of order $n$.
\end{theorem}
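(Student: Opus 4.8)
The plan is to bound the number of vertices the greedy algorithm \emph{infects}, i.e. $|B|$, by instead lower-bounding the number it leaves \emph{white}, $|W| = n - |B|$, and then to evaluate this lower bound against the power-law degree sequence $P(x)=\alpha x^{-\gamma}$ with $\alpha = n/\zeta(\gamma)$. The raw material is already present in the proof of Lemma~\ref{lem:gupperbound}: the algorithm maintains $\mathtt{whiteadj}[u]\le d(u)-t(u)$ for every $u$ throughout the run, so the final white set obeys $d_W(u)\le d(u)-t(u)$ for all $u$, every infected vertex has a neighbour in the saturated set $S=\{u:\,d_W(u)=d(u)-t(u)\}$, and consequently $|B|\le\sum_{u\in S}t(u)$ together with $\sum_{u\in S}(d(u)-t(u))\le\sum_{w\in W}d(w)$. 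For general graphs these sums were collapsed using the crude estimates $d(v)\le\Delta$ and $d(v)\ge\delta$; the whole task here is to replace that collapse by an exact, degree-class-by-degree-class evaluation under the power law.

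The key structural fact specific to strict majority is the behaviour of the two lowest degree classes. Since $t(v)=\lceil\frac{d(v)+1}{2}\rceil$, we have $d(v)-t(v)=0$ \emph{exactly} when $d(v)\in\{1,2\}$, and because the greedy processes vertices in ascending degree order these vertices are examined first and are saturated from the very start. I would analyse their fate explicitly: a degree-one leaf whose hub $w$ has $d(w)\ge 3$ is turned white (its hub is not yet saturated when the leaf is processed), the subsequent leaves of $w$ stay white only until $\mathtt{whiteadj}[w]$ reaches $d(w)-t(w)$, after which they are infected; meanwhile every neighbour of a degree-one or degree-two vertex is forced to be infected. Counting the guaranteed white vertices among the leaves is what should yield the leading term $\tfrac12 P(1)=\tfrac{n}{2\zeta(\gamma)}$, while the degree-two vertices, which can absorb leaves without ever turning white, produce the negative correction; the arithmetic I am aiming at is $\tfrac{\zeta(\gamma)}{2}P(2)=\tfrac{n}{2^{\gamma+1}}$, so that $|W|\ge \tfrac{n}{2\zeta(\gamma)}-\tfrac{n}{2^{\gamma+1}}$ and $|B|=n-|W|\le\bigl(1+\tfrac{1}{2^{\gamma+1}}-\tfrac{1}{2\zeta(\gamma)}\bigr)n$.

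Concretely I would (i) partition $V$ by degree; (ii) for each hub of degree $x\ge3$ bound the number of its leaf-neighbours that stay white by $\min(\ell_w,\,d(w)-t(w))$, where $\ell_w$ is the number of degree-one neighbours of $w$, and the number it must infect by the complement; (iii) account separately for the degree-one and degree-two vertices that are spent as blockers; and (iv) sum the per-class counts using $\sum_x \alpha x^{-\gamma}=\alpha\zeta(\gamma)=n$ together with the partial-sum identities for the Riemann zeta function already exploited in the lower-bound paragraph. Collecting terms and passing to $|B|=n-|W|$ should reproduce the claimed constant.

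The main obstacle is that the ascending-degree greedy is genuinely order-sensitive and suboptimal, so I cannot appeal to a clean Caro--Wei-type estimate: processing leaves first "wastes" them, and an adversary may route degree-one vertices onto degree-two centres or onto small hubs where the white fraction dips below one half, driving the white count down. The delicate point is therefore to prove a per-class lower bound on the white fraction that is robust to \emph{every} admissible adjacency realisation of the fixed power-law degree sequence and to \emph{every} tie-breaking in the ordering, rather than analysing one convenient construction. I would address this by establishing the per-hub bound $\min(\ell_w,\,d(w)-t(w))$ irrespective of processing order, and then showing that the power-law counts force enough leaves onto genuine hubs of degree at least three that the $\tfrac12 P(1)$ contribution survives after the degree-two correction is subtracted.
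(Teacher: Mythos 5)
Your plan is essentially the paper's proof: the paper also bounds $NPPTS(G)$ by counting only the degree-one vertices that the greedy leaves white, using exactly the per-hub observation you state (leaves are processed first, so a hub $w$ lets $\min(\ell_w,\,d(w)-t(w))$ of its $\ell_w$ leaf-neighbours stay white before it saturates), and then sums against $P(x)=\alpha x^{-\gamma}$ with $\alpha=n/\zeta(\gamma)$. Your concern about order-sensitivity is also resolved by the reason you yourself give: a leaf's only neighbour is its hub, so the number of white leaves at each hub is independent of tie-breaking among the degree-one vertices.

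The one step that does not hold up as stated is attributing the entire negative correction to degree-two vertices. For \emph{every} even-degree hub $w$ one has $d(w)-t(w)=d(w)/2-1$, so a hub of degree $4,6,8,\dots$ that is loaded with leaves retains one fewer white leaf than the $\ell_w/2$ your leading term charges to it; degree-two vertices are merely the extreme case $d-t=0$. Summing the per-hub bound in the form $\min(\ell_w,d(w)-t(w))\ge \ell_w/2-[d(w)\text{ even}]$ shows the loss is governed by the number of even-degree vertices, $\sum_{x\ge 1}P(2x)=\alpha 2^{-\gamma}\zeta(\gamma)$, which is exactly the quantity the paper subtracts and which produces the $n/2^{\gamma+1}$ term. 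Your expression $\frac{\zeta(\gamma)}{2}P(2)$ happens to coincide numerically with $\frac12\sum_{x\ge1}P(2x)$, but the mechanism you offer for it (leaves absorbed by degree-two centres) yields a loss proportional to $P(2)$ alone and leaves the even hubs of degree at least four unaccounted for, so a proof completed along your step (iii) as written would not establish the stated constant. Replacing that step by the even-degree accounting above closes the gap and reproduces the paper's computation verbatim.
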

\begin{proof}
We may assume that the input graph is connected. We prove that the number of uninfected vertices of degree $1$ are sufficient for this upper bound. Let $v$ be a vertex of degree more than $1$ with $k$ adjacent vertices of degree $1$ say $u_1,u_2\ldots u_k$.  If $d(v)$ is odd, it is clear that at least $\frac{k}{2}$ of the vertices $u_1,u_2\ldots u_k$ will be uninfected since $k\leq d(v)$. Note that according to the greedy algorithm, the value of $f_0$ for degree $1$ vertices are determined before any other vertex. If $d(v)$ is even, at least $\frac{k}{2}-1$ of vertices $u_1,u_2\ldots u_k$ will be uninfected.  Therefore we have:
\begin{equation*}
\begin{split}
 NPPTS(G) & \leq n-\frac{1}{2}(P(1)-\sum_{x=1}^{\infty}P(2x)) \\
& \leq n-\frac{1}{2}(\alpha\frac{1}{1^\gamma}-\alpha\sum_{x=1}^{\infty}\frac{1}{(2x)^{\gamma}}) \\
& = n-\frac{\alpha}{2}(1-\frac{1}{2^\gamma}\zeta(\gamma)) = n(1+\frac{1}{2^{\gamma+1}}-\frac{1}{2\zeta(\gamma)}) \\
\end{split}
\end{equation*}
\end{proof}

By previous lemma, we conclude that the Greedy NPPTS algorithm is a constant-factor approximation algorithm on power-law graphs under strict majority threshold. The lower bound and upper bound for different values of $\gamma$ are shown in Figure \ref{fig:chart}. As you can see our algorithm acts optimally on social networks with large value of power-law coefficient since upper and lower bound diagram meet each other for these values of power-law coefficient.

\begin{figure}
\begin{center}
\includegraphics[width=5cm, trim =7cm 16.5cm 6cm 4.5cm]{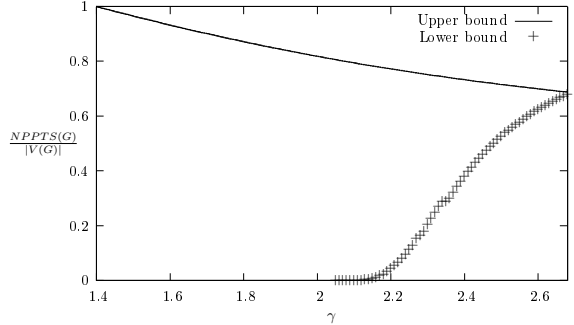}
\end{center}
\caption{Values of upper bound and lower bound in power-law graphs}
\label{fig:chart}
\end{figure} 

\section{Experimental Evaluations} \label{s:experim}
In this section we run our algorithm on real-world social networks as well as random power-law graphs with a wide range of power-law coefficients. Following the method used in 
~\cite{kempe2003maximizing}, we compare the performance of our algorithm to other 
heuristics for identifying influential individuals.  

\noindent {\bf Random power-law and real-world networks.} 
We evaluate the performance of the greedy  algorithm on graphs with various amount of power-law coefficient. Following a previously developed way of generating power-law graphs from \cite{aiello2000random}, we set two parameters $\alpha$ and $\gamma$ defined as follows: $\alpha$ is the logarithm of the graph size and $\gamma$ is the log-log growth rate (power-law coefficient). 
For details of the way to generate random power-law graphs, see Appendix ~\cite{experiments}.
We also  run our algorithms over four social networks' data: Who-trusts-whom network of Epinions.com, Slashdot social network, collaboration network of Arxiv Astro Physics, Arxiv High Energy Physics paper citation network, Amazon product co-purchasing network. 
In cases where graph is not connected we select graphs' giant component.


\noindent {\bf Setup.}  We compare our greedy
algorithm with heuristics based on nodes' degrees and centrality
within the network, as well as the  baseline of choosing random nodes to target.
High-Degree and distance-centrality heuristics choose vertices in the order of decreasing degree and decreasing average distance to other nodes. These heuristics are
commonly used in the social science literature as estimates of a node's influence in the social network~\cite{wasserman1994social,kempe2003maximizing}.


In each of these cases, in each step, we check whether the selected vertices are a perfect target set or not. This can be easily verified by simulating spread of influence process until the states of vertices become stable. The simulation process ends at a polynomially bounded time $\tau$ when for each $v \in V(G)$ we have $f_{\tau}(v) = f_{\tau-2}(v)$ (see Theorem \ref{t:convergence} and Theorem \ref{thm:convergence}). 

Notice that because the optimization problem
is NP-hard (Theorem \ref{thm:nphardness}), and the testbed graphs are prohibitively large, we
are not able to compute the optimum value to verify the actual quality of
approximations.

\noindent {\bf Experimental Results.} Figure \ref{f:randompowerlaw} shows the performance of our algorithm in comparison to introduced heuristics on random power-law graphs. For any value of $\gamma$ (power-law coefficient), all heuristics pick almost entire vertices of the graph while our algorithm pick a number of them between proved lower-bound and upper-bound. 
The same phenomena happens for the four real-world social networks data. The results are depicted in Figure \ref{f:realnetworkdatachart}. 

\begin{figure}[htbp]
\centering
\subfigure[Results on the random power-law graphs]{
\includegraphics[width=6cm]{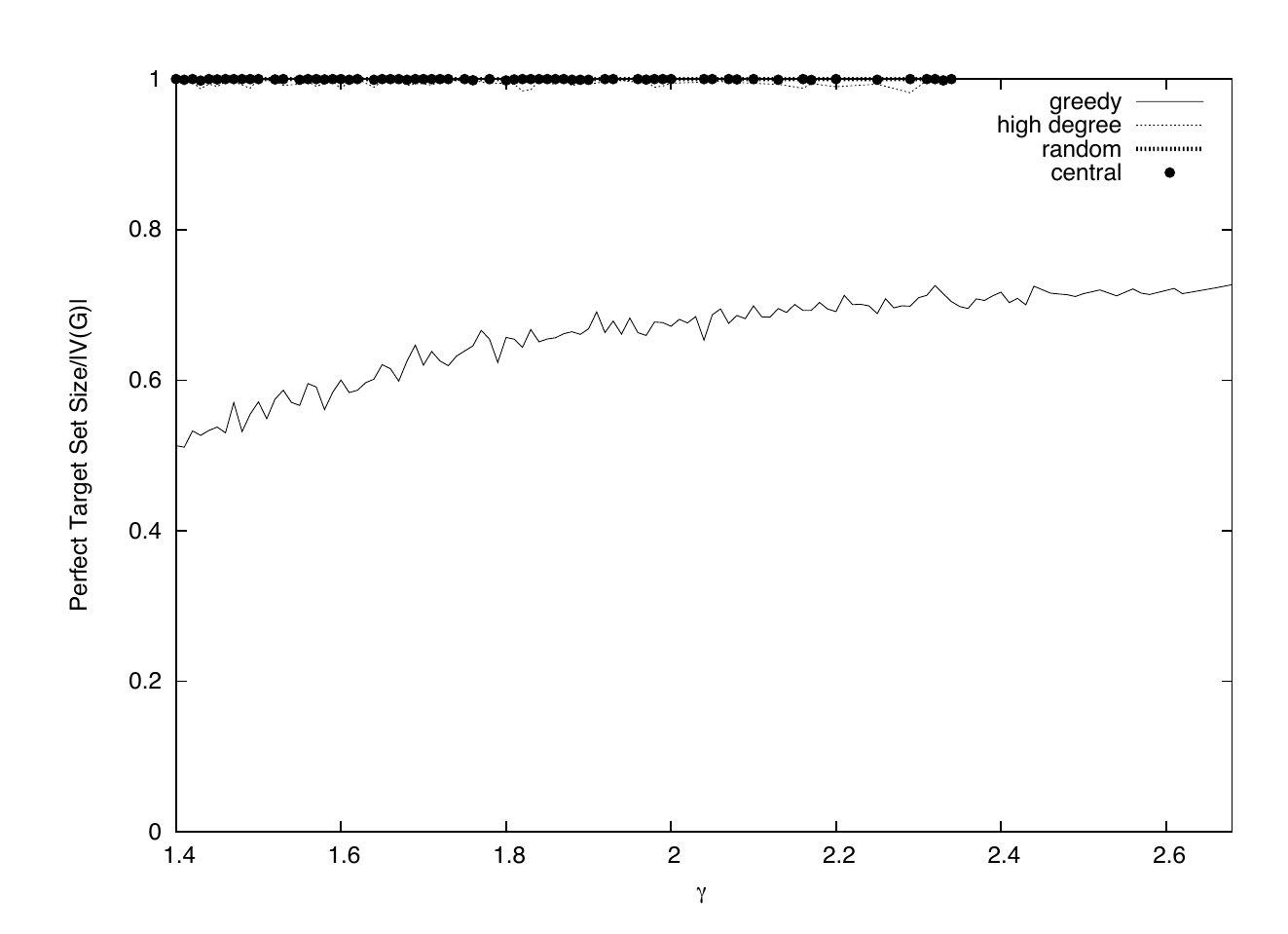}
\label{f:randompowerlaw}
}
\subfigure[Results on the real network data]{
\includegraphics[width=6cm]{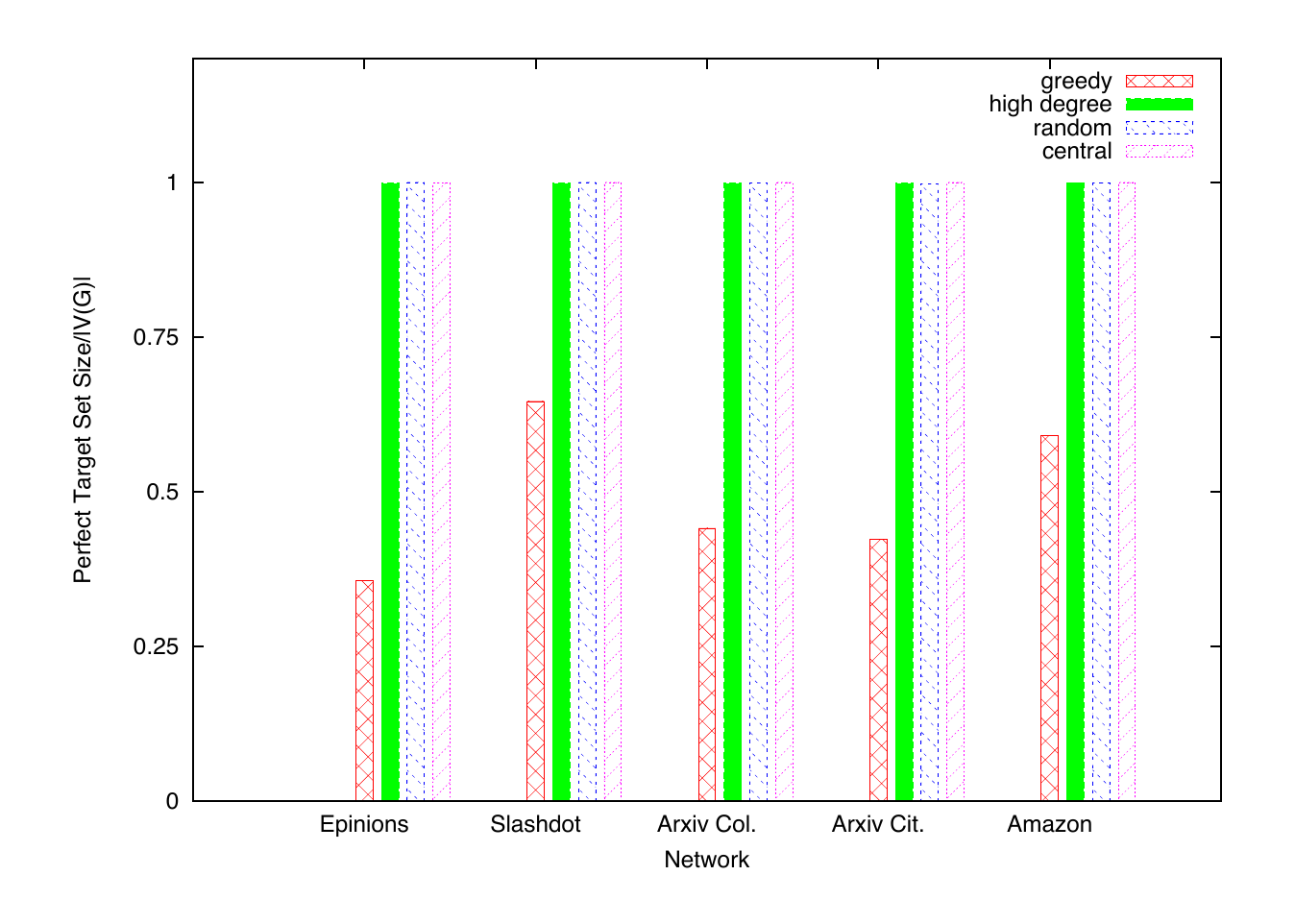}
\label{f:realnetworkdatachart}
}

\end{figure} 



\section{Convergence Issues} \label{sec:convergence}
Let the state graph $H$ of a non-progressive spread of influence process for graph $G$ be as follows: Each node of this graph represents one of  possible states of the graph. An edge between two states $A$ and $B$ in $H$ models the fact that applying one step of the influence process on 
state $A$  changes the state to state $B$.
First of all, one can easily see that the non-progressive model may not result to a singleton steady state for any dynamics.  To see this, consider the following example:  a cycle with $2k$ vertices $C = v_{1}v_{2}...v_{2k}$ and at time 0 infect vertices with odd indices. In this case, the process will oscillate  between exactly two states. In fact, one can show a general theorem that any dynamics will converge to either one or two states:

\begin{theorem} \label{t:convergence}
The non-progressive spread of influence process on a graph reaches a cycle of length of at most two (see appendix \ref{app:convergence}).
\end{theorem}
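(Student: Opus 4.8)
The plan is to exhibit a Lyapunov (energy) function defined on \emph{pairs} of consecutive configurations that never increases along the dynamics, and strictly decreases unless the configuration from two steps earlier reappears. Since the configuration space has only $2^{|V(G)|}$ elements, such an energy can take only finitely many values, so it can strictly decrease only finitely often; after that the sequence must satisfy $f_{\tau+1}=f_{\tau-1}$, which (by determinism of the update) forces the orbit to be periodic with period dividing $2$. This is exactly the parallel symmetric-threshold-network argument of Goles and Olivos, and the only structural fact it needs is that $G$ is undirected, i.e. that the adjacency matrix is symmetric.

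Concretely, let $A$ be the adjacency matrix of $G$ and set $\theta_v = t(v)-\tfrac12$, so that the update rule reads $f_{\tau+1}(v)=1$ iff $(Af_\tau)_v - \theta_v \ge \tfrac12 > 0$ and $f_{\tau+1}(v)=0$ iff $(Af_\tau)_v - \theta_v \le -\tfrac12 < 0$; because $(Af_\tau)_v$ is an integer and $\theta_v$ a half-integer, this comparison never ties. I would then define the energy
\[
E(x,y) = -\,x^{\top} A\, y + \sum_{v\in V(G)} \theta_v\,(x_v + y_v),
\]
and compute the one-step change, writing $x=f_{\tau-1}$, $y=f_\tau$, $z=f_{\tau+1}$. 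Using the symmetry $A=A^{\top}$ to cancel the cross terms $x^{\top}Ay = y^{\top}Ax$, the computation collapses to
\[
E(f_\tau,f_{\tau+1}) - E(f_{\tau-1},f_\tau) = \sum_{v\in V(G)}\big((Af_\tau)_v - \theta_v\big)\big(f_{\tau-1}(v) - f_{\tau+1}(v)\big).
\]

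The heart of the argument is the term-by-term sign analysis of this sum. If $f_{\tau+1}(v)=1$ then the first factor is positive while $f_{\tau-1}(v)-1\le 0$; if $f_{\tau+1}(v)=0$ then the first factor is negative while $f_{\tau-1}(v)\ge 0$; in both cases the product is $\le 0$, and it equals $0$ precisely when $f_{\tau-1}(v)=f_{\tau+1}(v)$. Hence $E$ is non-increasing, and it is strictly decreasing unless $f_{\tau-1}=f_{\tau+1}$ identically. Finiteness of the state space then yields a time $\tau$ with $f_{\tau+1}=f_{\tau-1}$, and applying the update repeatedly gives $f_{\tau+2k-1}=f_{\tau-1}$ and $f_{\tau+2k}=f_\tau$ for all $k$, so the process has reached a cycle of length at most two. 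The main obstacle I anticipate is purely bookkeeping: getting the threshold shift right so that the strict-inequality update matches the sign of $(Af_\tau)_v-\theta_v$ with no ties, and using symmetry correctly so the quadratic cross terms cancel. Notably, nothing in the argument uses the specific value $t(v)=\lceil (d(v)+1)/2\rceil$, so the same proof establishes convergence to a $1$- or $2$-cycle for any integer threshold function on an undirected graph.
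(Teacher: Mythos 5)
Your proof is correct: the half-integer shift $\theta_v = t(v)-\tfrac12$ does eliminate ties, the symmetry $A=A^{\top}$ does collapse the one-step change of $E$ to $\sum_{v}\bigl((Af_\tau)_v-\theta_v\bigr)\bigl(f_{\tau-1}(v)-f_{\tau+1}(v)\bigr)$, and the term-by-term sign analysis plus finiteness of the (non-increasing, finitely-valued) energy sequence forces $f_{\tau+1}=f_{\tau-1}$ eventually, hence a cycle of length at most two by determinism. This is essentially the same approach as the paper, whose entire proof is a citation of the Goles et al.\ theorem on symmetric threshold networks; you have simply inlined the standard Lyapunov-function proof of that cited result, which makes the argument self-contained (and, as you note, shows the specific strict-majority value of $t(v)$ is irrelevant).
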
 

Using this intuition, one can define the  convergence time of a non-progressive influence process under the strict majority rule  as the time it takes to converge to a cycle of size of two states, i.e., the convergence time is the minimum time $T$ at which  $f_{T}(v) = f_{T+2}(v)$ for all vertices $v \in V(G)$. For a set $S$  of initially infected vertices, let $ct_{G}(S)$ to be the convergence time of the non-progressive process under the strict majority model($T$). In the following theorem, we formally prove an upper bound of $O(|E(G)|)$ for this convergence time:

\begin{theorem} \label{thm:convergence}
For a given graph $G$ and any set $S \subseteq V(G)$, we have $ct_{G}(S) = O(|E(G)|)$. 
\end{theorem}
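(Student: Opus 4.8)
The plan is to exhibit a Lyapunov (energy) potential of the Goles--Olivos type whose total range is $O(|E(G)|)$ and which strictly decreases at every step until the process enters the period-two attractor guaranteed by Theorem~\ref{t:convergence}. First I would recode the $0/1$ dynamics in spin variables $s_v(\tau)=2f_\tau(v)-1\in\{-1,+1\}$. Under this change of variables the strict-majority rule becomes $s_v(\tau+1)=\operatorname{sgn}\bigl(\sum_{u\in N(v)}s_u(\tau)-\theta_v\bigr)$, where the bias $\theta_v=2t(v)-d(v)-1$ equals $0$ when $d(v)$ is odd and $1$ when $d(v)$ is even. The key elementary observation here is that $\sum_{u\in N(v)}s_u(\tau)-\theta_v$ is always an odd integer, so the argument of the sign function is never zero: there are no ties, and $s_v(\tau+1)$ has strictly the same sign as its local field $h_v(\tau):=\sum_{u\in N(v)}s_u(\tau)-\theta_v$.

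Next I would introduce the energy
$$E(\tau)=-\,s(\tau)^{\mathsf T}A\,s(\tau-1)+\theta^{\mathsf T}\bigl(s(\tau)+s(\tau-1)\bigr),$$
where $A$ is the (symmetric) adjacency matrix of $G$, so that $s(\tau)^{\mathsf T}A\,s(\tau-1)=\sum_{\{u,v\}\in E}\bigl(s_u(\tau)s_v(\tau-1)+s_v(\tau)s_u(\tau-1)\bigr)$. Using the symmetry of $A$, the telescoping difference simplifies to $E(\tau+1)-E(\tau)=-\sum_{v}\bigl(s_v(\tau+1)-s_v(\tau-1)\bigr)\,h_v(\tau)$. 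Because $s_v(\tau+1)=\operatorname{sgn}(h_v(\tau))$ with $h_v(\tau)\ne0$, each summand equals either $0$ (when $s_v(\tau+1)=s_v(\tau-1)$) or $2|h_v(\tau)|\ge2$ (when $s_v(\tau+1)=-s_v(\tau-1)$). Hence $E$ is non-increasing, and it strictly decreases --- by at least $2$ --- unless $s_v(\tau+1)=s_v(\tau-1)$ for every $v$, which is precisely the condition $f_{\tau+1}=f_{\tau-1}$ defining a period-two attractor.

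To finish, I would bound the range of $E$. Since $\theta_v\in\{0,1\}$ the potential is integer-valued; the cross term $s(\tau)^{\mathsf T}A\,s(\tau-1)$ lies in $[-2|E(G)|,\,2|E(G)|]$, and the bias term lies in $[-2\sum_v\theta_v,\,2\sum_v\theta_v]$ with $\sum_v\theta_v\le n$. Thus $E$ ranges over an integer interval of length $O(|E(G)|)$ (using $n=O(|E(G)|)$ after discarding isolated vertices, which never change state). Each step before convergence lowers $E$ by at least a fixed positive integer, so the number of such steps --- and therefore $ct_G(S)$ --- is $O(|E(G)|)$.

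The step I expect to be the main obstacle is the monotonicity computation together with the range estimate: one must verify that the no-tie property genuinely holds for the strict-majority bias $\theta_v$ (so that $s_v(\tau+1)$ is an honest $\operatorname{sgn}$ and the difference formula carries no boundary terms), and that the bias contribution does not inflate the range of $E$ beyond $O(|E(G)|)$. Handling the even-degree case, where $\theta_v=1$ introduces an asymmetry absent from the classical symmetric-threshold setting, is the delicate point; once the parity argument pins down $\theta_v\in\{0,1\}$ and rules out ties, the remainder is the standard energy-decrease bookkeeping.
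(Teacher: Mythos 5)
Your argument is correct, and it takes a genuinely different route from the paper. The paper never touches the bilinear energy: it first analyzes a \emph{sequential} version of the dynamics with the elementary potential ``number of bichromatic edges,'' obtains an $O(|E(G)|\cdot|V(G)|)$ bound on the number of single-vertex updates there, and then transfers the result to the concurrent dynamics by simulating it as a sequential dynamics on the bipartite double cover $H=(X,Y)$ of $G$ constructed in Lemma~\ref{thm:bip_to_gen}. You instead work directly on the synchronous process with the Goles--Olivos two-step energy $E(\tau)=-s(\tau)^{\mathsf T}A\,s(\tau-1)+\theta^{\mathsf T}(s(\tau)+s(\tau-1))$; your parity computation correctly pins down $\theta_v=2t(v)-d(v)-1\in\{0,1\}$ and shows the local field is an odd integer, so the strict decrease by at least $2$ per non-converged step goes through, and the range bound $O(|E(G)|+n)=O(|E(G)|)$ (after the harmless removal of isolated vertices) closes the argument. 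Your approach buys several things: it is self-contained, it yields an explicit constant (roughly $2|E(G)|+2n$ steps), and the same energy inequality gives Theorem~\ref{t:convergence} (eventual period at most two) as a byproduct rather than as an imported citation to \cite{goles1980periodic}; it also avoids the somewhat delicate bookkeeping in the paper's reduction, where the concurrent state of $G$ must be tracked on alternating sides of $H$. What the paper's route buys is reuse of machinery it already needs (the double-cover construction of Lemma~\ref{thm:bip_to_gen}) and a potential that requires no spin recoding or symmetry-of-$A$ manipulation. One small point worth making explicit in a final write-up: once $s(\tau+1)=s(\tau-1)$ holds for all $v$ at some first time $\tau^{*}$, determinism of the update gives $s(t+2)=s(t)$ for all $t\ge\tau^{*}-1$, so the first tie-free plateau really is the convergence time $ct_G(S)$ and not merely one visit to a recurrent configuration.
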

\begin{proof}
In each time step $\tau$ of the  non-progressive spread of influence, all the vertices apply the function $f_{\tau}$ {\em concurrently}. In order to prove the theorem for such {\em concurrent dynamics} , we first define a simplified {\em sequential dynamics},  prove the convergence time for this simplified dynamics, and finally  give a reduction from the concurrent to the sequential dynamics. In sequential dynamics, the vertices apply the influence process one by one in a sequence of rounds, where in each step one vertex applies the influence process exactly once. 

We first show that the sequential dynamic on every graph $G$ and under the strict majority model converges after at most $O(|E(G)|.|V(G)|)$ steps. To see this bound, consider the following potential function for a graph $G$:  the number of edges whose endpoints have different states. One can see that whenever a vertex changes its state from uninfected to infected the potential of $G$ will  decrease at least one and otherwise it remains unchanged. Consider a vertex which has $k$ state changes during the process until it final convergence. At least $k/2$ of these changes were from uninfected state to infected and so they cause one decrement in the potential function. The initial amount of $G$'s potential is at most $|E(G)|$ and in each step (or $|V(G)|$ consecutive steps),  we have at least one state change. So after at most $2|E(G)|.|V(G)|$ steps the potential of $G$ would reach its minimum, and the proof for the sequential dynamics is complete.

Now using the above observation, we show that the concurrent  dynamics convergences fast. Consider  graph $H = (X,Y)$ built from $G$ in Lemma \ref{thm:bip_to_gen}. We show that for every concurrent dynamics in $G$ with convergence time of $T$, there is an equivalent sequential dynamics in $H$ with convergence time of $c|V(G)|T$ for some constant $c$. This will prove $ct_{G} \in O(|E(G)|)$, since 
we know that the convergence time of the sequential dynamic in graph $H$ is at most $2|V(H)|.|E(H)| = 8|V(G)|.|E(G)| = c|V(G)|.T$. So $T \in O(|E(G)|)$. The main claim follows from  the proof of Lemma \ref{thm:bip_to_gen}. By induction on the number of steps, we can show that the state of vertices in $G$ is equal to the state of vertices in $X$ at odd steps and is equal to the state of vertices in $Y$ at even steps (as we did in the proof of Lemma \ref{thm:bip_to_gen}). Now order vertices of $X$ and $Y$ with numbers $1, 2, \cdots, |V(G)|$ and from $|V(G)|+1$ to $2|V(G)|$. It is easy to see that the sequential dynamics with this ordering, after $|V (H)|$ steps, has the same outcome under the concurrent dynamics in this graph.
\end{proof}

The above theorem is tight i.e. there exists a set of graphs and initial states with convergence time of $O(|E(G)|)$. In power-law graphs since average degree is constant, the number of edges is $O(|V|)$ and thus the convergence time of these graphs is $O(|V|)$. 

Finally, we study convergence time of non-progressive dynamics  on several real-world graphs, and observe the fast convergence of such dynamics on those graphs. See Appendix \ref{app:convergence} for details.
\section{Conclusions}
In this paper, we study the minimum target set selection problem in the non-progressive influence model under the strict majority rule and provide theoretical and practical results
for this model.  Our main results include upper bound and  lower bounds for these graphs, hardness and approximation algorithm for this problem. We also apply our techniques on power-law graphs and derive improved constant-factor approximation algorithms for this kind of graphs.

An important follow-up work is to study the minimum perfect set problem for  non-progressive models under other influence propagation rules, e.g. the general linear threshold model. It is also interesting  to design approximation algorithms for other special kinds of complex graphs such as small-world graphs. Another interesting research direction is to study maximum active set problem for non-progressive models. 

\section*{Acknowledgments}
The authors are thankful to Soroush Hosseini and Morteza Saghafian for their ideas and their helps.

\appendix
\section{Proofs from Section \ref{sec:general}} \label{app:general_proofs}
In this section we give the missing proofs from Section \ref{sec:general}.
\\
\\
\noindent {\bf Proof of Lemma \ref{thm:bip_to_gen}.}
Consider a graph $G$ with $n$ vertices and vertex set $V(G)=\{v_1,v_2,\ldots v_n\}$ and threshold function $t$. Assume that there is a Perfect Target Set $f_0$ for $G$ such that $cost(f_0) < \alpha |V(G)| $. Let $H=(X,Y)$ be a bipartite graph such that $X=\{x_1,\ldots x_n\}$ and $Y=\{y_1,\ldots y_n\}$ and $t'$ be the threshold function of vertices of $H$ such that for every $1\leq i \leq n$, $t'(x_i)=t'(y_i)=t(v_i)$. Define $E(H)=\{x_{i}y_{j} | v_{i}v_{j} \in E(G) \}$. Let $g_0$ be a Target Set for $H$ such that $g_0(x_i)=g_0(y_i)=f_0(v_i)$ for every $1 \leq i \leq n$. We claim that $g_0$ is a PTS for $H$. By induction on $\tau$, we prove that $g_{\tau}(x_i)=g_{\tau}(y_i)=f_{\tau}(v_i)$ for every $1 \leq i \leq n$. By the definition, the assertion is true for $\tau=0$. Now let the assertion be true for time $\tau$. Consider a vertex $x_i\in X$. We have $\sum_{y\in N(x_i)} g_{\tau}(y)=\sum_{v\in N(v_i)} f_{\tau}(v)$ and also $t(x_i)=t(v_i)$, thus $x_i$ is influenced at time $\tau+1$ by $g_0$ iff $v_i$ is influenced at time $\tau+1$ by $f_0$. By similar justification we can show that $g_{\tau+1}(y_i)=f_{\tau+1}(v_i)$ too. So $g_0$ is a PTS for $H$ iff $f_0$ is a PTS for $G$, which is a contradiction since by assumption $NPPTS(H)\geq \alpha |V(H)|$ but $cost(g_0)<\alpha |V(H)|$.

Now we prove that $NPPTS(G)\leq \beta |V(G)|$. Consider the bipartite graph $H$ with the aforementioned definition. By assumption there is a Perfect Target Set $g_0'$ with weight at most $\beta |V(H)|$ for $H$. With no loss of generality assume that the number of vertices in $X$ for which $g_0'$ is one (initially infected vertices) is less than the number of initially infected vertices of $Y$. Let $f_0'$ be a PTS for $G$ such that $f_0'(v_i)=g_0'(x_i)$ for every $1 \leq i \leq n$. We have $cost(g_0')\leq \beta |V(G)|$ since $|V(H)|=2|V(G)|$. By induction on $\tau$ we show that $f_{2\tau}'(v_i)=g_{2\tau}'(x_i)$ and $f_{2\tau+1}'(v_i)=g_{2\tau+1}'(y_i)$ for every $1\leq i \leq n$ and every $\tau \geq 0$. The assertion is trivial for $\tau=0$. Now let the assertion be true for time $2\tau$. Consider a vertex $v_i\in V(G)$. We have $\sum_{v\in N(v_i)} f_{2\tau}'(v)=\sum_{x\in N(y_i)} g_{2\tau}'(x)$ and also $t(v_i)=t(y_i)$, thus $v_i$ is influenced at time $2\tau+1$ by $f_0'$ iff $y_i$ is influenced at time $2\tau+1$ by $g_0'$. By similar justification we can show that $f_{2\tau+2}'(v_i)=g_{2\tau+2}'(x_i)$ too. So $g_0'$ is a PTS for $H$ iff $f_0'$ is a PTS for $G$ and so $NPPTS(G) \leq \beta |V(G)|$.
\\
\\
In the following, $d_{S}(v)$ denotes the number of neighbors of $v$ in subset. 
\\
\\
\noindent {\bf Proof of Lemma \ref{lem:maxdegree}.}
Consider a set $S\subseteq V(G)$. With no loss of generality, suppose that $f_0(v)=0$ for every $v\in S \cap X$. We prove the lemma by contradiction. Assume that for every $u\in S$, $d_S(u) > d(u) - t(u) $. For every $y\in S\cap Y$, $f_1(y)=0$ since $y$ has at least $d(y) - t(y) + 1$ adjacent vertices in $S\cap X$ for which $f_0$ is zero. Similarly, for every $x\in S \cap X$, $f_2(x)=0$ since $x$ has at least $d(x) - t(x) + 1$ adjacent vertices in $S \cap Y$ for which $f_1$ is zero, and so on. Thus $f_0$ is not a Perfect Target Set, a contradiction.
\\
\\
\noindent {\bf Proof of Lemma \ref{lem:maxedge}.}
We prove the lemma by induction on $n$. For $n=1$ the assertion is trivial. Consider a graph $G$ with $n$ vertices. Let $S=V(G)$. By assumption, there is at least one vertex $v$, such that $d(v) \leq d(v) - t(v)$. Remove the vertex $v$ from $G$. By induction hypothesis $G - v$ has at most $\sum_{u\in V(G-v)} (d(u) - t(u))$ edges, so $G$ has at most $\sum_{u\in V(G)} (d(u) - t(u))$ edges.
\\
\\
\noindent {\bf Proof of Theorem \ref{thm:tightlower}.}
Consider a $(d+1)$-regular graph $G_{1}$ with $m_{1}$ vertices . In step $i$ ($1 \leq i \leq \infty)$, Add $m_{i+1} = \frac{d}{d+1}m_{i}$ vertices to the graph and connect each of them to $G_{i}$ by $d+1$ edges. Each vertex of $G_{i}$ must receive exactly $d$ newly edges. Name the subgraph formed by these vertices $G_{i+1}$. This process is shown in Figure \ref{fig:tightlower}. The final graph has $n = \sum_{i=1}^{\infty}{m_{i}} = m_{1}(d+1)$ vertices. It is easy to show that $V(G_{1})$ is a PTS, so $NPPTS(G) \leq |V(G_{1})| = m_{1} = \frac{2n}{2d+2} = \frac{2n}{\Delta + 1 }$.
\begin{figure}
\begin{center}
\includegraphics[width=6cm]{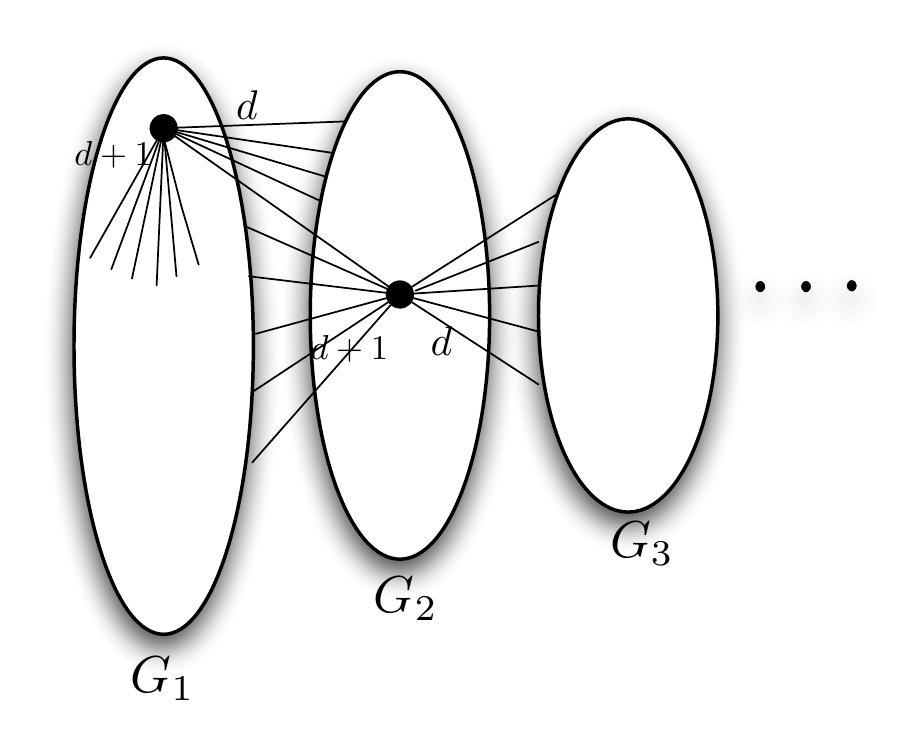}
\end{center}
\caption{A tight example for $NPPTS(G)$'s lower bound}
\label{fig:tightlower}
\end{figure} 
\\
\\
\begin{figure}
\begin{center}
\includegraphics[width=8cm]{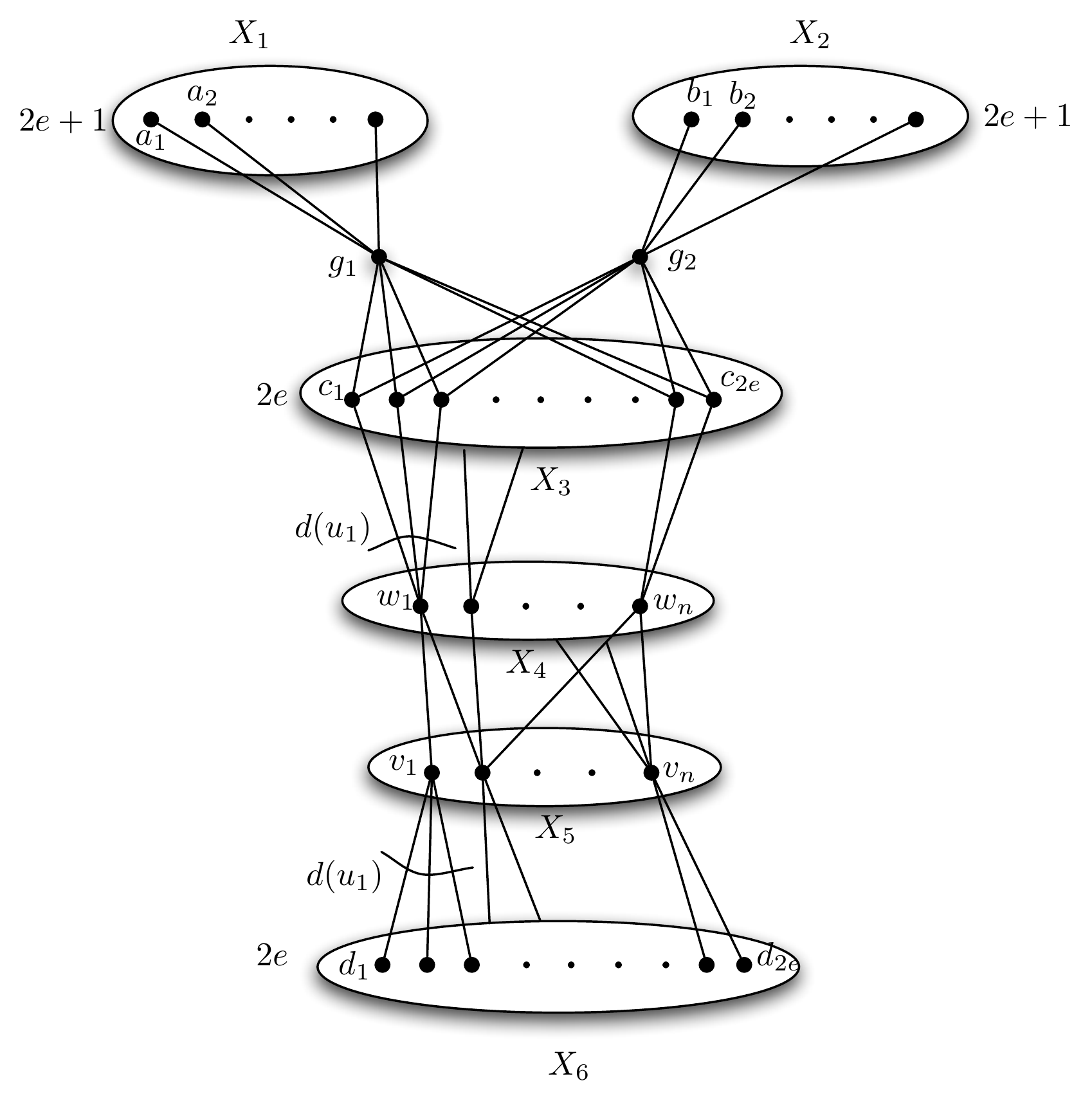}
\end{center}
\caption{The graph $H$}
\label{fig:hardness}
\end{figure} 
\noindent {\bf Proof of Theorem \ref{thm:nphardness}.}
In an instance of the minimum dominating set problem (MDS), given a graph $G(V, E)$, our goal is to find find a subset $S\subseteq V(G)$ of minimum cardinality such that for any node $v\not \in S$, we have $S\cap N(v)\not = \emptyset$. We give reduction from this NP-hard problem to our problem. 
Given an instance $G$ of MDS with $V(G) = \{u_{1}, u_{2},...,u_{n}\}$ and $|E(G)| = e$, we define an undirected graph $H$ as follows (See Figure \ref{fig:hardness}): First, let
\begin{align*}
X_{0} & = \{g_{1},g_{2}\} &  X_{1} & = \{a_{i}| 1 \leq i \leq 2e+1\} \\
X_{2} & = \{b_{i}| 1 \leq i \leq 2e+1\} &  X_{3} & = \{c_{i}| 1 \leq i \leq 2e\} \\
X_{4} & = \{w_{i}| 1 \leq i \leq n\} & X_{5} & = \{v_{i}| 1 \leq i \leq n\} \\
X_{6} & = \{d_{i}| 1 \leq i \leq 2e\}.\\
\end{align*}
Now let $H(V, E)$ be 
\begin{align*}
V(H) & = \cup_{i=0}^{6}{X_{i}} \\
\end{align*}
\begin{align*}
E(H) & = \{g_{1}a_{i}| 1 \leq i \leq 2e+1\} \\
	& \cup \hspace{0.1cm}\{g_{2}b_{i}| 1 \leq i \leq 2e+1\} \\
	& \cup \hspace{0.1cm} \{g_{1}c_{i}| 1 \leq i \leq 2e\} \\
	& \cup \hspace{0.1cm} \{g_{2}c_{i}| 1 \leq i \leq 2e\} \\
	& \cup  \hspace{0.1cm} \{w_{i}c_{j}|  1 \leq i \leq n, \sum_{k=1}^{i-1}{d(u_{k})} \leq j \leq \sum_{k=1}^{i}{d(u_{k})} \} \\
	& \cup  \hspace{0.1cm} \{v_{i}w_{j}| u_{i}u_{j} \in E(G) \vee i = j \} \\
	& \cup \hspace{0.1cm} \{v_{i}d_{j}|  1 \leq i \leq n, \sum_{k=1}^{i-1}{d(u_{k})} \leq j \leq \sum_{k=1}^{i}{d(u_{k})} \}. \\
\end{align*}
Suppose that $D$ is a minimum dominating set for $G$. Define $D^{H} = \{v_{i}|u_{i}\in D\}$.  We show that $NPPTS(G) = 2e+n+4+|D|$. It is easy to see that vertices in $X_{0} \cup X_{3} \cup X_{4} \cup D^{H}$ plus one vertex from each of $X_{1}$ and $X_{2}$ form a Perfect Target Set for the graph $H$. So, we have $NPPTS(H) \leq |X_{0}| + |X_{3}| + |X_{4}| + |D^{H}|+ 2 =  2e+n+4+|D|$. 

It remains to prove that $NPPTS(H) \geq 2e+n+4+|D|$. Suppose that $S \subseteq V(H)$ is a PTS for $H$ with minimum cardinality. Consider vertex $g_{1}$ in time $\tau$. If $f_{\tau}(g_{1}) = 0$, in time $\tau+1$ for every vertex $a_{i} \in X_{1}$ we will have $f_{\tau+1}(a_{i}) = 0$ and then $f_{\tau+2}(g_{1}) = 0$. So we have, $g_{1} \in S$. Similarly, we have $g_{2} \in S$. Moreover, at least $2e+1$ vertices from each of $g_{1}$ or $g_{2}$'s neighbors must be in $S$, so w.l.o.g suppose that $X_{3}$'s members plus at least one vertex from each of $X_{1}$ and $X_{2}$ are in $S$. By this setting, the vertices of $X_{0} \cup X_{1} \cup X_{2} \cup X_{3}$ become infected and keep this infection for every $\tau > 0$.

Consider a vertex $w_{k} \in X_{4}$. Let  $B(w_{k}) = \{d_{i} \in X_{6}| d_{i} $ is reachable from $w_{k}$ by a path of length 2$\}.$ Suppose that $w_{k} \notin S$. If there exists a $d_{i} \in B(w_{k}) \cap S$, we replace it by $w_k$ in $S$. This modification does not prevent $S$ from being a PTS and also does not increase $|S|$. So, we may assume that $B(w_{k})\cap S = \emptyset$ when $w_k\notin S$. Now, consider one of $w_{k}$'s neighbors in $X_{5}$ such as $v_{p}$. None of $v_{p}$'s neighbors in $X_{6}$ are infected initially. Thus $v_{p}$ has at most $d(u_{p})$ initially infected neighbors. this implies that $f_{1}(v_{p}) = 0$ and it is true for all other $w_{k}$'s neighbors in $X_{5}$. Similarly, $f_{2}(w_{k}) = 0$ and $f_{2}(d_{j}) = 0$ for all $d_{j} \in B(w_{k})$. 
Similar to this argument, one can show  that for every $\tau > 0$, $f_{2\tau}(w_{k}) = 0$ and $f_{2\tau}(d_{j}) = 0$ for all $d_{j} \in B(w_{k)}$. Therefore, for every $w_{k} \in X_{4}$, at least one of its neighbors in $X_{5}$ must be in $S$. This means that $S\cap X_{5}$ must have at least $|D|$ vertices and the proof is complete.


\section{Experimental Evaluation Data on Real Networks} \label{app:experiments_table}

\noindent {\bf Generating andom power-law networks.} 
We evaluate the performance of the greedy  algorithm on graphs with various amount of power-law coefficient. Following a previously developed way of generating power-law graphs from \cite{aiello2000random}, we set two parameters $\alpha$ and $\gamma$ defined as follows: $\alpha$ is the logarithm of the graph size and $\gamma$ is the log-log growth rate (power-law coefficient). 
The number of vertices with degree $x$, $y$  satisfies 
$$\log y = \alpha - \gamma \log x.$$
The random power-law graph model is defined as follows: given $n$ weighted vertices with weights $w_{1}, w_{2}, \cdots, w_{n}$, a pair $(i,j)$ of vertices appears as an edge with probability 
$w_{i}w_{j}p$ independently. These parameters $p$ and $w_{1}, w_{2}, \cdots, w_{n}$ must satisfy 
\begin{itemize}
\item $\sharp\{i|w_{i} = 1\} = \lfloor e^{\alpha} \rfloor - r$ and $\sharp\{i|w_{i} = k\} = \lfloor \frac{e^{\alpha}}{k^{\gamma}} \rfloor$ for $k = 2,3,.., \lfloor e^{\frac{\alpha}{\gamma}} \rfloor$. Here $\alpha$ is a value minimizing $|n - \sum_{k=1}^{\lfloor e^{\frac{\alpha}{\gamma}} \rfloor}{\lfloor \frac{e^{\alpha}}{k^{\gamma}} \rfloor}|$ and $r = n - \sum_{k=1}^{\lfloor e^{\frac{\alpha}{\gamma}} \rfloor}{\lfloor \frac{e^{\alpha}}{k^{\gamma}} \rfloor}$.
\item $p = \frac{1}{\sum_{i=1}^{n}{w_{i}}}$
\end{itemize}
One can easily see the expected degree of $i$'th vertex would be $w_{i}$ and also vertices' weights follow power-law. 

{\noindent \bf Experimental results for four real-world networks.}
Table \ref{t:realnetworkdata} includes the exact amount of greedy NPPTS's output compared to the output of other heuristics. 
\begin{table*}[htpb]
\caption{\textbf{Results on the real networks}}
\label{t:realnetworkdata}
\centering
\begin{tabular}{|p{5cm}|c|c|c|c|c|c|} 
\hline
Network  & No. of & $\gamma$ & \multicolumn{4}{c}{ No. of nodes selected by algorithm} \vline \\
	& nodes & & Greedy & High Degree & Central & Random  \\ 

\hline\hline
Who-trusts-whom network of Epinions.com & 75888 & 1.50 & 27131 & 75878 & 75879 & 75888 \\
\hline
Slashdot social network & 77360 & 1.68 & 49978 & 77327 & 77360 & 77360 \\
\hline
Collaboration network of Arxiv Astro Physics & 18772 & 1.84 & 8287 & 18771 & 18772 & 18763 \\
\hline
Arxiv High Energy Physics paper citation network & 34546 & 2.05 & 14647 & 34539 & 34546 & 34505
 \\
\hline

Amazon product co-purchasing network  & 262111 & 2.54 & 155085 & 262111 & 262005 & 262026
 \\
\hline
\end{tabular}\label{table:experimentals}
\end{table*}

\section{Missed Things from Section \ref{sec:convergence}} \label{app:convergence}
{\noindent \bf Proof of theorem \ref{t:convergence}.}
In \cite{goles1980periodic}, it is shown that, for a function $\Delta$ from $\{0, 1\}^n$ to $\{0, 1\}^n$ whose components from a symmetric set of threshold functions, the repeated application of $\Delta$, leads either to a fixed point or to a cycle of length two. Since the set of functions $f_{\tau}$ (defined in Section \ref{s:introduction}) are symmetric threshold functions, the lemma follows immediately from this fact.
\\
{\noindent \bf Average convergence time of the process on social networks.} Applying  a sampling technique and simple concentration inequalities, one can easily show that  the average convergence time of the non-progressive process on  graph $G$ can be approximated with an additive error of $\epsilon$  in time $O(\frac{e^{2}.n\log(n)}{\epsilon^{2}})$ where $e = |E(G)|$ and $n = |V(G)|$.
\begin{theorem}
Computing the average convergence time of the non-progressive process on  graph $G$, with an error of $\epsilon$ is possible in time $O(\frac{e^{2}.n\log(n)}{\epsilon^{2}})$ where $e = |E(G)|$ and $n = |V(G)|$. 
\end{theorem}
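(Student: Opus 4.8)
The plan is to compute the average convergence time not exactly but by \textbf{Monte Carlo sampling combined with a concentration inequality}, using the worst-case convergence bound of Theorem \ref{thm:convergence} to control the range of the sampled quantity. First I would make precise what is being averaged: the quantity of interest is $\mu = \frac{1}{2^{n}}\sum_{S \subseteq V(G)} ct_{G}(S)$, i.e. the expectation of $ct_{G}(S)$ when the initially infected set $S$ is drawn uniformly at random from the $2^{n}$ subsets of $V(G)$ (equivalently, each vertex is infected independently with probability $1/2$). Computing $\mu$ exactly is infeasible since it ranges over exponentially many initial states, so I would instead draw $m$ independent uniform samples $S_{1}, \ldots, S_{m}$, simulate the process from each until it stabilizes, record $X_{i} = ct_{G}(S_{i})$, and output the empirical mean $\hat{\mu} = \frac{1}{m}\sum_{i=1}^{m} X_{i}$.

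The key structural input is Theorem \ref{thm:convergence}: for \emph{every} $S$ we have $ct_{G}(S) \le c\,|E(G)| = ce$ for an absolute constant $c$, so each $X_{i}$ is a bounded random variable taking values in $[0, ce]$. This bounded range is exactly what makes Hoeffding's inequality \cite{hoeffding1963probability} applicable, giving $\Pr\bigl[\,|\hat{\mu} - \mu| \ge \epsilon\,\bigr] \le 2\exp\!\bigl(-\tfrac{2m\epsilon^{2}}{(ce)^{2}}\bigr)$. To drive the failure probability below $1/\mathrm{poly}(n)$ it then suffices to take $m = O\!\bigl(\tfrac{e^{2}\log n}{\epsilon^{2}}\bigr)$ samples, and this is precisely where both the $e^{2}$ and the $\log n$ factors in the claimed running time originate.

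It remains to bound the cost of a single sample. Simulating the concurrent dynamics from $S_{i}$ is run until the stopping criterion $f_{\tau}(v) = f_{\tau-2}(v)$ holds for all $v$, which is justified by Theorem \ref{t:convergence} (the process reaches a cycle of length at most two) and which by Theorem \ref{thm:convergence} is reached within $O(e)$ rounds; the per-round and convergence-detection work, amortized using the potential argument of Theorem \ref{thm:convergence} that bounds the total number of vertex state changes over an entire run, contributes the remaining $O(n)$ factor per sample. Multiplying the number of samples by the per-sample simulation cost then yields the stated bound $O\!\bigl(\tfrac{e^{2} n \log n}{\epsilon^{2}}\bigr)$, with the exact exponent bookkeeping being routine once these two factors are in place. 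I expect the main obstacle to be the large range of the sampled variable: because $ct_{G}(S)$ can be as large as $\Theta(e)$, the Hoeffding bound forces an $e^{2}$ blow-up in the sample size, so the whole argument hinges on having the sharp worst-case convergence bound of Theorem \ref{thm:convergence} available to keep this range---and hence the sample count---under control.
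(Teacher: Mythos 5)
Your proposal matches the paper's proof essentially step for step: sample $m$ uniformly random initial sets, use the $O(e)$ worst-case bound from Theorem \ref{thm:convergence} to bound the range of $ct_G(S)$, apply Hoeffding's inequality to get $m = O(e^2\log n/\epsilon^2)$, and multiply by an $O(n)$ per-sample simulation cost. The only difference is cosmetic --- you attempt to justify the $O(n)$ per-sample cost via an amortization argument, whereas the paper simply asserts it --- so the two arguments are the same.
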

\begin{proof}
Define random variable $X_{S} = ct_{G}(S)$. We uniformly select some of the $V(G)$'s subsets $S_{1}, S_{2}, ..., S_{m}$ and take the average of $X_{S_{i}}$s. In \cite{hoeffding1963probability}, Hoeffding shows that with large value of $m$ and if $X_{S_{i}}$s are bounded between $a_{i}$ and $b_{i}$, $\overline{X_{S}}$ would be a good estimation (with an error less than $\epsilon$) for $E[X_{S}]$ that is our desired target:
$$Pr(|\overline{X_{S}} - \mathrm{E}[\overline{X_{S}}]| \geq \epsilon) \leq 2\exp \left( - \frac{2\epsilon^2m^2}{\sum_{i=1}^m (b_i - a_i)^2} \right)$$
From Theorem \ref{thm:convergence} we know putting $a_{i} = 0$ and $b_{i} = 8e$ for all $1 \leq i \leq m$, meets the preconditions of the above inequality. To have $Pr(|\overline{X_{S}} - \mathrm{E}[\overline{X_{S}}]|) \leq \frac{2}{n}$, we can set 
$$m^{2} \geq \frac{ln(n)\sum_{i=1}^m (b_i - a_i)^2}{\epsilon^{2}} = \frac{64.m.e^{2}.ln(n)}{\epsilon^{2}} \Rightarrow m \geq \frac{64e^{2}.ln(n)}{\epsilon^{2}}$$
Since computing each $X_{S_{i}}$ needs $O(n)$ the total time will be at most $O(mn) = O(\frac{e^{2}.n\log(n)}{\epsilon^{2}})$.
\end{proof}
\begin{corollary}
Computing the average convergence time of the process on a power-law graph $G$, with an error of $\epsilon$ is possible in time $O(\frac{n^{3}\log(n)}{\epsilon^{2}})$ where $n = |V(G)|$. 
\end{corollary}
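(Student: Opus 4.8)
The plan is to obtain this corollary as an immediate specialization of the preceding theorem, which already establishes that the average convergence time of the non-progressive process on an arbitrary graph $G$ can be approximated within additive error $\epsilon$ in time $O(\frac{e^{2} n \log n}{\epsilon^{2}})$, where $e = |E(G)|$ and $n = |V(G)|$. Since the power-law structure enters only through the edge count, the entire task reduces to bounding $e$ as a function of $n$ for the assumed degree distribution.

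First I would compute the total degree sum from the degree distribution $P(x) = \alpha x^{-\gamma}$ with $\alpha = \frac{n}{\zeta(\gamma)}$. Summing degrees gives $\sum_{x=1}^{\infty} x\, P(x) = \alpha \sum_{x=1}^{\infty} x^{1-\gamma} = \alpha\, \zeta(\gamma-1)$. Provided $\gamma > 2$, the argument $\gamma - 1$ exceeds $1$ and lies in the region of convergence of the Riemann zeta function, so $\zeta(\gamma-1)$ is a finite constant. Consequently $\sum_{v} d(v) = \frac{\zeta(\gamma-1)}{\zeta(\gamma)}\, n = O(n)$, and therefore $e = \tfrac12 \sum_{v} d(v) = O(n)$. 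This is precisely the observation already recorded in Section~\ref{sec:convergence}, namely that power-law graphs have constant average degree and hence $O(|V|)$ edges.

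Finally, I would substitute $e = O(n)$ into the general bound, yielding a running time of $O(\frac{e^{2} n \log n}{\epsilon^{2}}) = O(\frac{n^{2}\cdot n \log n}{\epsilon^{2}}) = O(\frac{n^{3}\log n}{\epsilon^{2}})$, which is exactly the claimed complexity. No further algorithmic content is needed: the sampling-plus-Hoeffding estimator of the previous theorem is reused verbatim, and only the parameter $e$ is re-expressed in terms of $n$.

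The only delicate point, and hence the main (and essentially only) obstacle, is justifying the convergence of $\zeta(\gamma-1)$, which requires $\gamma > 2$. This is precisely the regime the paper assumes for real-world social networks, where $\gamma$ is taken to lie in $[2,3]$. At $\gamma = 2$ the sum $\sum_{x} x^{1-\gamma}$ degenerates to the harmonic series and the degree sum is no longer a constant multiple of $n$, so the linear edge bound — and with it the stated $O(\frac{n^{3}\log n}{\epsilon^{2}})$ running time — would fail; thus confirming $\gamma > 2$ is the crux of the argument.
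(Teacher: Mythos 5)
Your proof is correct and follows the same route the paper intends: the corollary is an immediate substitution of $e = O(n)$ (from the constant average degree of a power-law graph, as noted in Section~\ref{sec:convergence}) into the $O(\frac{e^{2}n\log n}{\epsilon^{2}})$ bound of the preceding theorem. Your additional remark that the linear edge bound requires $\gamma > 2$ for $\zeta(\gamma-1)$ to converge is a legitimate and slightly more careful observation than the paper makes, but it does not change the argument.
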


 As a result, we can perform experimental evaluation of convergence time in several families of graphs. In particular, through experimental evaluations, we show the average time of convergence for random power law graphs with $\epsilon = 0.1$. Figure \ref{f:convergencechart} shows average convergence time calculated by sampling for $500$ random power law graphs with average of $100$ vertices.

\begin{figure}
\centering
\includegraphics[width=5cm, trim=2cm 2cm 7cm 17cm]{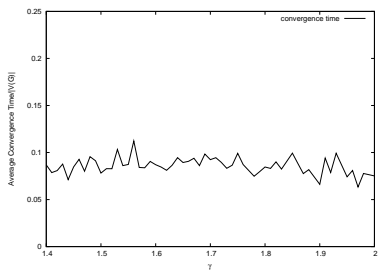}
\caption{Average convergence time on random power-law graphs }
\label{f:convergencechart}
\end{figure}

\end{document}